\newtheorem{myclaim}{Claim}
\newcommand{\N}{\mathbb{N}}
\newcommand{\wH}{w_\mathrm{H}}
\newcommand{\X}{\mathrm{X}}
\newcommand{\id}{\mathrm{id}}
\newcommand{\lcm}{\mathrm{lcm}}
\newcommand{\notimplies}{\centernot\implies}
\newcommand{\keywords}[1]{\par\addvspace\baselineskip
\noindent\keywordname\enspace\ignorespaces#1}
\begin{document}
\mainmatter  

\title{Commutative automata networks}

\titlerunning{Commutative automata networks}

\author{Florian Bridoux\inst{1} \and Maximilien Gadouleau\inst{2} \and Guillaume Theyssier\inst{3}}

\authorrunning{Commutative automata networks}

\institute{Universit\'e d'Aix-Marseille, CNRS, Centrale Marseille, LIS, Marseille, France. florian.bridoux@lis-lab.fr.
\and
Department of Computer Science, Durham University, South Road, Durham, DH1 3LE, UK. m.r.gadouleau@durham.ac.uk.
\and
Universit\'e d'Aix-Marseille, CNRS, Centrale Marseille, I2M, Marseille, France. guillaume.theyssier@cnrs.fr.
}

\toctitle{Commutative automata networks}
\tocauthor{Florian Bridoux, Maximilien Gadouleau and Guillaume Theyssier}
\maketitle



\begin{abstract}
Automata networks are mappings of the form $f : Q^Z \to Q^Z$, where $Q$ is a finite alphabet and $Z$ is a set of entities; they generalise Cellular Automata and Boolean networks. An update schedule dictates when each entity updates its state according to its local function $f_i : Q^Z \to Q$. One major question is to study the behaviour of a given automata networks under different update schedules. In this paper, we study automata networks that are invariant under many different update schedules. This gives rise to two definitions, locally commutative and globally commutative networks. We investigate the relation between commutativity and different forms of locality of update functions; one main conclusion is that globally commutative networks have strong dynamical properties, while locally commutative networks are much less constrained. We also give a complete classification of all globally commutative Boolean networks.

\keywords{Automata networks, Boolean networks, commutativity, update schedules.}
\end{abstract}

\section{Introduction} \label{sec:introduction}

Automata networks are mappings of the form $f : Q^Z \to Q^Z$, where $Q$ is a finite alphabet and $Z$ is a set of entities; they generalise Cellular Automata and Boolean networks. An update schedule dictates when each entity updates its state according to its local function $f_i : Q^Z \to Q$. 

One major question is to study the behaviour of a given automata network under different update schedules. A lot is known about the relation between synchronous and asynchronous dynamics of finite or infinite interaction networks (see \cite{GN12,NS17} and references therein). In particular, there is a stream of work that focuses on networks with the most ``asynchronous power,'' i.e. networks which can asynchronously simulate a large amount of parallel dynamics \cite{CFG14,BCG19}. 

In this paper, we study automata networks that are ``robust to asynchronicity,'' i.e. invariant under many different update schedules. 
This gives rise to two definitions, locally commutative and globally commutative networks. A network is globally commutative if applying $f$ in parallel is equivalent to updating different parts of $Z$ sequentially, for any possible ordered partition of $Z$. A network is locally commutative if applying $f$ in parallel is equivalent to updating different elements of $Z$ sequentially, for any possible ordered partition of $Z$ into singletons. (Formal definitions will be given in Section \ref{sec:commutative}.)

In this paper, we investigate the relation between commutativity (defined in section~\ref{sec:commutative}) and different forms of locality of update functions, namely: finiteness properties of the influences of some nodes on some other ones (section~\ref{sec:finiteness}), dynamical locality (section~\ref{sec:dynamic_locality}), and idempotence (section~\ref{sec:idempotence}) ; one main conclusion is that globally commutative networks have strong dynamical properties, while locally commutative networks are much less constrained. We also determine the transition graphs of all globally commutative Boolean networks (section~\ref{sec:classification}).

\section{Commutativity properties} \label{sec:commutative}

Let $Q$ be a finite alphabet of size $q \ge 2$, and let $Z$ be a countable set. We refer to any element $x \in Q^Z$ as a \textbf{configuration}. An automata network, or simply \textbf{network}, is any mapping $f : Q^Z \to Q^Z$. 

We denote the set of all subsets of $Z$ as $\mathcal{P}(Z)$ and the set of all finite subsets of $Z$ as $\mathcal{FP}(Z)$. For all $j \in \N$, we denote $[j] = \{1, \dots, j\}$. For any $x \in Q^Z$ and any $s \in \mathcal{P}(Z)$, we will use the shorthand notation $x_s = (x_i : i \in s)$ and $x = (x_s, x_{Z \setminus s})$ (the ordering of the elements of $s$ will be immaterial). We will extend these notations to networks as well, i.e. $f = (f_i : i \in Z)$ and $f_s = (f_i: i \in s)$. For any $a \in Q$ and $s \in \mathcal{P}(Z)$, we denote the configuration $x \in Q^s$ with $x_i = a$ for all $i \in s$ as $a_s$ (we shall commonly use examples such as $0_Z$ and $1_Z$). If $Q = \{0,1\}$ and $Z = \N$, then the \textbf{density} of $x \in \{0,1\}^\N$ is 
\[
	\delta(x) := \limsup_{n \to \infty} \frac{\wH(x_1, \dots, x_n)}{n},
\]
where the Hamming weight $\wH(x_1, \dots, x_n)$ is the number of nonzero elements of $(x_1, \dots, x_n)$.

For any $s \in \mathcal{P}(Z)$, the update of $s$ according to $f$ is $f^{(s)} : Q^Z \to Q^Z$ such that $f^{(s)}(x) = (f_s(x), x_{Z \setminus s})$. We denote $f^{(s_1, \dots, s_k)} = f^{(s_k)} \circ \dots \circ f^{(s_1)}$ for any finite sequence $s_1, \dots, s_k \in \mathcal{P}(Z)$.
We consider the following commutativity properties.
\begin{enumerate}[label=\textbf{(C\arabic*)}]
	\item \label{item:Ci}
	$f^{(i,j)} = f^{(j,i)}$ for all $i,j \in Z$.

	\item \label{item:Cb}
	$f^{(b,c)} = f^{(c,b)}$ for all $b,c \in \mathcal{FP}(Z)$.

	\item \label{item:Cs}
	$f^{(s,t)} = f^{(t,s)}$ for all $s,t \in \mathcal{P}(Z)$.
\end{enumerate}

We say that a network satisfying \ref{item:Ci} is \textbf{locally commutative}, while a network satisfying \ref{item:Cs} is \textbf{globally commutative}. We remark that if $f$ is locally (respectively, globally) commutative, then $f^{(s)}$ is locally (respectively, globally) commutative for any $s \in \mathcal{P}(Z)$.

Throughout this paper, we use notation ${\bf (X)} \implies {\bf (Y)}$ to mean that if a network satisfies Property ${\bf (X)}$, then it satisfies Property ${\bf (Y)}$ as well. We use shorthand notation such as ${\bf (X)} \implies {\bf (Y)} \implies {\bf (Z)}$ with the obvious meaning, and we use the notation ${\bf (X)} \notimplies {\bf (Y)}$ to mean that there exists a network satisfying ${\bf (X)}$ but not ${\bf (Y)}$. For instance, we have $\ref{item:Cs} \implies \ref{item:Cb} \implies \ref{item:Ci}$.

The commutativity properties above have alternative definitions. An \textbf{enumeration} of $Z$ is an ordered partition of $Z$, i.e. a sequence $Y = (y_\tau : \tau \in \N)$, where $ \bigcup_{\tau \in \N} y_\tau = Z$ and $y_\tau \cap y_\sigma = \emptyset$ for all $\tau \ne \sigma$. If each $y_\tau$ is either a singleton or the empty set, then $Y$ is a sequential enumeration; if each $y_\tau$ is finite, then $Y$ is a finite-block enumeration. For any enumeration $Y$ and any $t \in \N$, we denote $Y_t = (y_1, \dots, y_t)$. We then define $f^Y : Q^Z \to Q^Z$ by
	${f^Y_{y_\tau} := f^{Y_\tau}_{y_\tau}}$ for all ${\tau \in \N}$.
Since all the $y_\tau$ are disjoint, we see that $f^{Y_t}_{y_\tau} = f^{Y_\tau}_{y_\tau} = f^Y_{y_\tau}$ for any $t \ge \tau$. We then consider the following alternative commutativity properties.
\begin{enumerate}[label=\textbf{(C\arabic*a)}]
	\item \label{item:Cia}
	$f^I = f^J$ for any two sequential enumerations $I$ and $J$ of $Z$.

	\item \label{item:Cba}
	$f^B = f^C$ for any two finite-block enumerations $B$ and $C$ of $Z$.

	\item \label{item:Csa}
	$f^S = f^T$ for any two enumerations $S$ and $T$ of $Z$.
\end{enumerate}
We also consider the following alternative properties, which at first sight are stronger than those listed just above.
\begin{enumerate}[label=\textbf{(C\arabic*b)}]
	\item \label{item:Cib}
	$f = f^I$ for any sequential enumeration $I$ of $Z$.

	\item \label{item:Cbb}
	$f = f^B$ for any finite-block enumeration $B$ of $Z$.

	\item \label{item:Csb}
	$f = f^S$ for any enumeration $S$ of $Z$.
\end{enumerate}

We prove that in fact, those three versions are equivalent. This helps us show that \ref{item:Ci} and \ref{item:Cb} are also equivalent. 

%
%
%

\begin{theorem} \label{th:equivalence_commutative}
Commutativity properties are related as follows:
\begin{align}
	\label{item:equivalence_singleton}
	\ref{item:Ci} &\iff \ref{item:Cia} \iff \ref{item:Cib}.	\\
	\label{item:equivalence_finite}
	\ref{item:Cb} &\iff \ref{item:Cba} \iff \ref{item:Cbb}.	\\
	\label{item:equivalence_all}
	\ref{item:Cs} &\iff \ref{item:Csa} \iff \ref{item:Csb}.	\\
        \label{item:CiCb}
	\ref{item:Ci} &\iff \ref{item:Cb} \notimplies \ref{item:Cs}.
\end{align}
\end{theorem}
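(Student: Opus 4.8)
The plan is to deduce all four lines of the theorem from three ingredients — a \emph{rearrangement lemma}, a \emph{common-tail trick}, and an \emph{unfolding lemma} — together with one explicit counterexample. I would first establish the rearrangement lemma: each of \ref{item:Ci}, \ref{item:Cb}, \ref{item:Cs} lets one swap two \emph{adjacent} terms of a composition $f^{(a_1,\dots,a_k)}=f^{(a_k)}\circ\dots\circ f^{(a_1)}$, and since adjacent transpositions generate the symmetric group, such a composition is invariant under every permutation of the $a_i$ — under \ref{item:Ci} when the $a_i$ are distinct elements, under \ref{item:Cb} when they are pairwise disjoint finite sets, and under \ref{item:Cs} when they are pairwise disjoint sets (only disjoint swaps are ever needed, since consecutive blocks of an enumeration are disjoint). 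This gives the implications $\ref{item:Ci}\Rightarrow\ref{item:Cib}$, $\ref{item:Cb}\Rightarrow\ref{item:Cbb}$, $\ref{item:Cs}\Rightarrow\ref{item:Csb}$ for free: for an enumeration $Y=(y_1,y_2,\dots)$, rewrite $f^{(y_1,\dots,y_\tau)}$ as $f^{(y_\tau,y_1,\dots,y_{\tau-1})}$, whose $y_\tau$-block is $f_{y_\tau}$ because the subsequent updates do not touch $y_\tau$; hence $f^Y=f$. As $\ref{item:Cib}\Rightarrow\ref{item:Cia}$, $\ref{item:Cbb}\Rightarrow\ref{item:Cba}$, $\ref{item:Csb}\Rightarrow\ref{item:Csa}$ are trivial, what remains in each line is the step back from the (Cxa) version to (Cx).

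For that I would use the common-tail trick, which settles it directly for \emph{disjoint} blocks: given distinct $i,j$ (resp.\ disjoint finite $b,c$; resp.\ disjoint $s,t$ — using that $Z$ is countable to enumerate the complement), apply the (Cxa) hypothesis to two enumerations that begin with $i,j$ and with $j,i$ and then coincide; comparing the two swapped blocks of $f^I=f^J$ yields $f_i\circ f^{(j)}=f_i$ and $f_j\circ f^{(i)}=f_j$, from which $f^{(i,j)}=f^{(j,i)}$ is immediate. For singletons this already closes line~\eqref{item:equivalence_singleton}. For lines~\eqref{item:equivalence_finite} and~\eqref{item:equivalence_all} it remains to upgrade ``disjoint blocks'' to ``all blocks'', which is also exactly the content of $\ref{item:Ci}\Rightarrow\ref{item:Cb}$ in line~\eqref{item:CiCb} (the converse being trivial).

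The heart of the argument is therefore the unfolding lemma: if $f$ satisfies \ref{item:Cib}, then for every finite $b=\{b_1,\dots,b_k\}$ one has $f^{(b)}=f^{(b_1,\dots,b_k)}$ — an induction on $k$ whose step is the instance $f_{b_k}\circ f^{(b_1,\dots,b_{k-1})}=f_{b_k}$ of \ref{item:Cib}. Two corollaries fall out: $f^{(s)}=f^{(s\setminus u)}\circ f^{(u)}$ for finite $s$ and $u\subseteq s$, and $f_j\circ f^{(s)}=f_j$ for finite $s$ and $j\notin s$. A short coordinatewise verification then gives, for finite $b,c$ with $d=b\cap c$,
\[
	f^{(b,c)} = f^{(b\cup c)}\circ f^{(d)} = f^{(c,b)},
\]
the middle expression being patently symmetric in $b$ and $c$ (off $b\cup c$ nothing moves; on $(b\cup c)\setminus d$ each coordinate becomes its $f$-value at $x$; on $d$ each becomes its $f$-value at $f^{(d)}(x)$). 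This proves \ref{item:Cb} for all finite blocks and closes line~\eqref{item:equivalence_finite}. The same computation, now reading the two corollaries off \ref{item:Csb} directly (which supplies them for \emph{all} sets, not merely finite ones), gives $f^{(s,t)}=f^{(s\cup t)}\circ f^{(s\cap t)}=f^{(t,s)}$ and closes line~\eqref{item:equivalence_all}.

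It remains to show $\ref{item:Cb}\notimplies\ref{item:Cs}$, which is precisely where the finiteness in the unfolding lemma cannot be removed. I would take $Z=\N$, $Q=\{0,1\}$, and $f_i(x)=g(\delta(x))$ for every $i$, where $g(\alpha)=1$ iff $\alpha\in[\tfrac13,\tfrac23]$. Since altering finitely many coordinates leaves $\delta$ unchanged, $f_j\circ f^{(i)}=f_j$ for all $i\ne j$, so $f$ is locally commutative and hence (by the part already established) satisfies \ref{item:Cb}. But with $s$ the even integers, $t$ the odd integers, and $x$ the indicator of $s$ (so $\delta(x)=\tfrac12$ and $g(\delta(x))=1$), one has $f^{(s)}(x)=x$ while $f^{(t)}(x)=1_\N$, whence $f^{(s,t)}(x)=1_\N$; on the other hand $1_\N$ has density $1$, so $f^{(s)}$ resets its even coordinates to $0$ and $f^{(t,s)}(x)\ne 1_\N$, so \ref{item:Cs} fails. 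I expect the main obstacle to be the step $\ref{item:Ci}\Rightarrow\ref{item:Cb}$: converting a hypothesis about transposing \emph{single} updates into a statement about \emph{simultaneous} updates of arbitrary finite sets is what forces the parallel-to-sequential identity $f^{(b)}=f^{(b_1,\dots,b_k)}$, and the breakdown of that identity for infinite blocks is exactly what leaves room for $\ref{item:Cb}\notimplies\ref{item:Cs}$.
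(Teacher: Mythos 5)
Your proposal is correct and follows essentially the same route as the paper: the cyclic scheme $\mathbf{(Cx)}\Rightarrow\mathbf{(Cxb)}\Rightarrow\mathbf{(Cxa)}\Rightarrow\mathbf{(Cx)}$ via permuting prefixes of enumerations, the unfolding of finite blocks into singletons to get $\ref{item:Ci}\Rightarrow\ref{item:Cb}$, and a density-based counterexample for $\notimplies\ref{item:Cs}$. One bookkeeping remark: inside the step $\ref{item:Csa}\Rightarrow\ref{item:Cs}$ you invoke corollaries ``read off \ref{item:Csb}'', which at that point in the cycle is a conclusion rather than a hypothesis; this is harmless because both corollaries already follow from the disjoint commutativity you have just derived from \ref{item:Csa} via the common-tail trick (equivalently, $\ref{item:Csa}\Rightarrow\ref{item:Csb}$ is immediate by comparing any enumeration with $(Z,\emptyset,\emptyset,\dots)$), but the dependency should be stated that way.
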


\begin{proof}
\eqref{item:equivalence_singleton}-\eqref{item:equivalence_all}. We only prove the equivalence \eqref{item:equivalence_all} for \ref{item:Cs}; the other equivalences are similarly proved. Firstly, $\ref{item:Csb} \implies \ref{item:Csa}$. Secondly, if $f$ satisfies \ref{item:Csa}, then 
\[
	f^{(s,t)}_t = f^{(s,t, Z \setminus (s \cup t) )}_t = f^{(t,s, Z \setminus (s \cup t) )}_t = f^{(t,s)}_t,
\]
and by symmetry $f^{(s,t)}_s = f^{(t,s)}_s$, thus $f^{(s,t)} = f^{(t,s)}$ and $f$ satisfies \ref{item:Cs}. Thirdly, if $f^{(s,t)} = f^{(t,s)}$ for all $s,t$, then let $S = (s_\tau : \tau \in \N)$; we have 
\[
	f^S_{s_\tau} = f^{S_\tau}_{s_\tau} = f^{(s_\tau , s_1, \dots, s_{\tau-1})}_{s_\tau} = f_{s_\tau}
\]
for any $\tau$ and hence $f^S = f$.

\eqref{item:CiCb}. We first prove $\ref{item:Ci} \implies \ref{item:Cb}$. If $f$ satisfies \ref{item:Ci}, then for any finite subsets $b = \{ b_1, \dots, b_k \}$ and $c = \{ c_1, \dots, c_l \}$, the equivalence in \eqref{item:equivalence_singleton} yields
\[
	f^{(b,c)} = f^{(b_1, \dots, b_k, c_1, \dots, c_l)} = f^{(c_1, \dots, c_l, b_1, \dots, b_k)} = f^{(c,b)}.
\]

We now prove $\ref{item:Ci} \notimplies \ref{item:Cs}$ by exhibiting a network satisfying \ref{item:Ci} but not \ref{item:Cs}. Let $Q = \{0,1\}$ and $Z = \N$, then let $f(x) = 1_Z$ if $\delta(x) \le 1/2$ and $f(x) = 0_Z$ otherwise. Since $\delta( f^{(i)}(x)) = \delta(x)$ for any $i \in Z$, we easily obtain that $f^{(i)}$ and $f^{(j)}$ commute. On the other hand, let $x = 0_Z$ and $s$ a set of density $1/3$ (e.g. all multiples of $3$). Then $f^{(s, Z \setminus s)}_s (x) = 1_s$ while $f^{(Z \setminus s, s)}_s (x) = 0_s$.\qed
\end{proof}


\section{Commutativity and finiteness properties} \label{sec:finiteness}

The example of a network $f$ that satisfies \ref{item:Ci} but not \ref{item:Cs} used the fact that $f$, though it depended on all the variables $x_i$ ``globally'', did not depend on each $x_i$ ``individually'': changing only the value of any $x_i$ could not change the value of $f_j(x)$. We thus consider various finiteness properties for the local functions.

For any $\phi : Q^Z \to Q$ and any ordered pair $(x,y) \in Q^Z \times Q^Z$, we say $u \subseteq Z$ is an \textbf{influence} of $\phi$ for $(x,y)$ if 
\[
	\phi(x_{Z \setminus u}, y_u) = \phi(y), \text{ and } \phi( x_{Z \setminus t}, y_t ) \ne \phi(y) \ \forall t \subset u.
\]

\begin{lemma} \label{lem:influence}
Let $\phi : Q^Z \to Q$ and $x, y \in Q^Z$.
\begin{enumerate}[label=(\arabic*)]
\item \label{item:v}
If there exists $v \in \mathcal{FP}(Z)$ such that $\phi(x_{Z \setminus v}, y_v) = \phi(y)$, then there exists a finite influence of $\phi$ for $(x,y)$. 
\end{enumerate}
Let $u \subseteq Z$ be an influence of $\phi$ for $(x,y)$.
\begin{enumerate}[resume, label=(\arabic*)]
	\item \label{item:influence_delta} 
	$u \subseteq \Delta(x,y) := \{i \in Z : x_i \ne y_i \}$.

	\item \label{item:influence_empty}
	$u = \emptyset$ if and only if $\phi(x) = \phi(y)$.

	\item \label{item:influence_subset}
	For any $t \subseteq u$, there exists $z \in Q^Z$ such that $t$ is an influence of $\phi$ for $(z,y)$.
\end{enumerate}
\end{lemma}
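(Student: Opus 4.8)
The plan is to handle the four items in order; items \ref{item:v} and \ref{item:influence_empty} are essentially one-liners, item \ref{item:influence_delta} is a short argument by contradiction, and item \ref{item:influence_subset} carries the only genuine bookkeeping.

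For \ref{item:v}, I would use finiteness directly: the family of subsets $t \subseteq v$ with $\phi(x_{Z \setminus t}, y_t) = \phi(y)$ is finite and nonempty (it contains $v$), hence has a minimal element $u$ for inclusion. The first clause of the definition holds by the choice of $u$; for the second clause, any $t \subsetneq u$ is in particular a subset of $v$, so minimality of $u$ forces $\phi(x_{Z \setminus t}, y_t) \ne \phi(y)$. Thus $u$ is a finite influence of $\phi$ for $(x,y)$.

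For \ref{item:influence_delta}, suppose for contradiction that some $i \in u$ satisfies $x_i = y_i$, and set $t = u \setminus \{i\} \subsetneq u$. Checking coordinate by coordinate, $(x_{Z \setminus t}, y_t)$ and $(x_{Z \setminus u}, y_u)$ are the same configuration: they agree with $y$ on $u \setminus \{i\}$, with $x$ on $Z \setminus u$, and at the coordinate $i$ the former equals $x_i$ and the latter equals $y_i$, which are equal. Hence $\phi(x_{Z \setminus t}, y_t) = \phi(x_{Z \setminus u}, y_u) = \phi(y)$, contradicting the minimality clause. Item \ref{item:influence_empty} is then immediate: if $u = \emptyset$ the first clause reads exactly $\phi(x) = \phi(y)$; conversely, if $\phi(x) = \phi(y)$ but $u \ne \emptyset$, then $\emptyset \subsetneq u$ and the second clause applied to $t = \emptyset$ gives $\phi(x) \ne \phi(y)$, a contradiction, so $u = \emptyset$.

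The substantive item is \ref{item:influence_subset}. Given $t \subseteq u$, the idea is to choose $z$ so that the coordinates of $u \setminus t$ are already "correct" and therefore drop out of all the relevant comparisons, while nothing changes on $t$: take $z := (x_{Z \setminus (u \setminus t)}, y_{u \setminus t})$, i.e. $z$ agrees with $y$ on $u \setminus t$ and with $x$ everywhere else. The key identity, which I would verify coordinate by coordinate using the disjoint decomposition $Z = r \sqcup (t \setminus r) \sqcup (u \setminus t) \sqcup (Z \setminus u)$, is that for every $r \subseteq t$ one has $(z_{Z \setminus r}, y_r) = (x_{Z \setminus w}, y_w)$ with $w := r \cup (u \setminus t)$; moreover $w \subseteq u$, and $w = u$ precisely when $r = t$ (since the union defining $w$ is disjoint from $t \supseteq r$). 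Taking $r = t$ yields the first clause, $\phi(z_{Z \setminus t}, y_t) = \phi(x_{Z \setminus u}, y_u) = \phi(y)$; taking $r \subsetneq t$ gives $w \subsetneq u$, whence $\phi(z_{Z \setminus r}, y_r) = \phi(x_{Z \setminus w}, y_w) \ne \phi(y)$ by the minimality clause for the influence $u$. Therefore $t$ is an influence of $\phi$ for $(z, y)$. The only obstacle I anticipate is getting this set arithmetic exactly right; once the coordinate-wise identity is in hand, both clauses follow mechanically from the two defining properties of $u$.
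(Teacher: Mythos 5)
Your proof is correct and follows essentially the same route as the paper's: minimality of the witnessing set for item (1), passing to a proper subset that yields the same configuration for items (2)–(3), and the choice $z = (x_{Z \setminus (u \setminus t)}, y_{u \setminus t})$ for item (4), for which you verify the key identity $(z_{Z \setminus r}, y_r) = (x_{Z \setminus (r \cup (u \setminus t))}, y_{r \cup (u \setminus t)})$ in more detail than the paper does. The only cosmetic differences (taking a minimal-by-inclusion set among subsets of $v$ rather than a minimum-cardinality one, and removing a single coordinate rather than all of $u \setminus \Delta(x,y)$ at once) do not change the argument.
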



We then consider three finiteness properties for a function $\phi : Q^Z \to Q$.
\begin{enumerate}[label=\textbf{(F\arabic*)}]
	\item \label{item:F1}
	For any $x,y \in Q^Z$, there exists an influence of $\phi$ for $(x,y)$.

	\item \label{item:F2}
	For any $x,y \in Q^Z$, there exists a finite influence of $\phi$ for $(x,y)$.

	\item \label{item:F3}
	There exists $b \in \mathcal{FP}(Z)$ such that for any $x,y \in Q^Z$, there exists an influence of $\phi$ for $(x,y)$ contained in $b$.
\end{enumerate}
By extension, we say a network $f : Q^Z \to Q^Z$ satisfies \ref{item:F1} (\ref{item:F2}, \ref{item:F3}, respectively) if for all $i \in Z$, $f_i$ satisfies \ref{item:F1} (\ref{item:F2}, \ref{item:F3}, respectively). Clearly, $\ref{item:F3} \implies \ref{item:F2} \implies \ref{item:F1}$.



%
%
%
%

In order to emphasize the role of different commutativity properties, we introduce the following notation for our results: ${\bf (C)} \vdash {\bf (Y)} \implies {\bf (Z)}$ means that the implication ${\bf (Y)} \implies {\bf (Z)}$ holds when we restrict ourselves to networks with Property ${\bf (C)}$. This notation will be combined with the other ones introduced so far.

\begin{theorem} \label{th:finiteness}
The commutativity and finiteness properties are related as follows.
\begin{align}
	\label{item:F1Cs}
	\ref{item:Cs} &\vdash \ref{item:F1} \notimplies \ref{item:F2}.\\
	\label{item:F2CsF3} 
	\ref{item:Cs} &\vdash \ref{item:F2} \notimplies \ref{item:F3}. \\
	\label{item:CiF1Cs}
	\ref{item:Ci} &\vdash \ref{item:F1} \notimplies \ref{item:Cs}.\\	
	\label{item:CiF2Cs}
	\ref{item:Ci} &\vdash \ref{item:F2} \implies \ref{item:Cs}.
\end{align}
\end{theorem}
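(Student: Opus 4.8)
The plan is to prove each of the four sub-statements of Theorem~\ref{th:finiteness} more or less independently, since they are separate assertions about the interplay of commutativity and the finiteness hierarchy \ref{item:F1}--\ref{item:F3}. I expect the substantive content to be concentrated in the final implication \eqref{item:CiF2Cs}, with the three separations \eqref{item:F1Cs}, \eqref{item:F2CsF3}, \eqref{item:CiF1Cs} being witnessed by explicit constructions.

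For \eqref{item:F1Cs} and \eqref{item:F2CsF3}, I would first check that \emph{every} network vacuously satisfies \ref{item:F1} restricted to globally commutative networks in a useful way — actually, rather than that, I would revisit the density-based example $f$ used in the proof of Theorem~\ref{th:equivalence_commutative}, or a variant of it. That map sends $x$ to $1_Z$ or $0_Z$ according to whether $\delta(x)\le 1/2$; for a fixed ordered pair $(x,y)$ with $\phi(x)\ne\phi(y)$ one can exhibit an infinite influence (flipping infinitely many coordinates to change the density across the threshold) but no finite one, so it satisfies \ref{item:F1} but not \ref{item:F2}; and one must check it is globally commutative, which it is \emph{not} by the theorem above, so for \eqref{item:F1Cs} I actually need a \emph{different}, genuinely globally commutative witness. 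The natural candidate is a constant network $f\equiv c_Z$ (trivially globally commutative) suitably perturbed so that some $f_i$ still fails \ref{item:F2}: e.g. $f_i(x)$ depends on whether $x$ has finite or infinite support. For \eqref{item:F2CsF3} I want a globally commutative $f$ where each $f_i$ has finite influences for every pair but no single finite set $b$ works uniformly; a diagonal-type construction over $Z=\N$ where the relevant influence of $f_i$ for the pair $(0_Z, e_{(i)})$-type configurations grows with the index should do it, once one verifies global commutativity (likely because each $f^{(s)}$ still acts as something density- or support-type invariant).

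For \eqref{item:CiF1Cs}, the witness again should be a locally commutative but not globally commutative network satisfying \ref{item:F1}; the density example from Theorem~\ref{th:equivalence_commutative} already satisfies \ref{item:Ci} and fails \ref{item:Cs}, so I only need to verify it satisfies \ref{item:F1}, i.e. that for every pair $(x,y)$ some (possibly infinite, possibly empty) influence of each $f_i$ exists — and by Lemma~\ref{lem:influence}\ref{item:influence_empty} the empty set is an influence whenever $\phi(x)=\phi(y)$, while otherwise one greedily shrinks $\Delta(x,y)$ to a minimal influence; the only delicate point is whether a minimal influence exists without a finiteness hypothesis, which needs a Zorn's-lemma-style or well-ordering argument on subsets of $\Delta(x,y)$, and I'd want to confirm \ref{item:F1} is really about mere existence and that this argument goes through.

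The main obstacle is \eqref{item:CiF2Cs}: showing that a locally commutative network in which every local function has, for every pair, a \emph{finite} influence must be globally commutative. By Theorem~\ref{th:equivalence_commutative}\eqref{item:CiCb} we already know $\ref{item:Ci}\iff\ref{item:Cb}$, so the content is upgrading finite-block commutativity \ref{item:Cb} to full commutativity \ref{item:Cs} under hypothesis \ref{item:F2}. The plan is: fix arbitrary $s,t\in\mathcal{P}(Z)$ and a configuration $x$, and show $f^{(s,t)}(x)=f^{(t,s)}(x)$ coordinatewise; for a coordinate $i\in t$ we must compare $f_i$ evaluated at $(f_s(x), x_{Z\setminus s})$ versus at $x$, and the key is that $f_i$'s value is determined by a finite influence $u$, so only finitely many coordinates of the argument matter — and those finitely many coordinates can be ``covered'' by a finite block $b\supseteq u$ on which we may instead apply the already-established finite-block commutativity \ref{item:Cb} to the restricted sequence $(s\cap b', t\cap b', \dots)$-type blocks, or more carefully: use \ref{item:F2} to replace the infinite updates $f^{(s)}, f^{(t)}$ by their effect on a large enough finite window, invoke \ref{item:Cbb} ($f=f^B$ for finite-block enumerations) to commute within that window, then let the window exhaust $Z$. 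The technical heart is bookkeeping: one must argue that enlarging the finite influence set does not disturb the equality and that the influences of $f_i$ after the update $f^{(s)}$ are still controlled, likely using Lemma~\ref{lem:influence}\ref{item:influence_delta} and \ref{item:influence_subset} to localize everything inside a common finite set, after which \ref{item:Cb} finishes the job.
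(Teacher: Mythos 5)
There is a genuine gap in your witnesses for \eqref{item:F1Cs} and especially \eqref{item:CiF1Cs}. For \eqref{item:CiF1Cs} you propose to reuse the density example from Theorem~\ref{th:equivalence_commutative} and ``only verify it satisfies \ref{item:F1}''; but that network does \emph{not} satisfy \ref{item:F1}, and this is precisely the point the paper makes at the start of Section~\ref{sec:finiteness}. Concretely, take $\phi(x)=1$ iff $\delta(x)\le 1/2$ and the pair $(x,y)=(1_\N,0_\N)$: any $u$ with $\phi(x_{Z\setminus u},y_u)=\phi(y)$ must be nonempty (indeed infinite), yet removing a single element from $u$ changes the configuration in only one coordinate and hence does not change the density, so $u\setminus\{i\}$ still satisfies the equation. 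The family of candidate sets therefore has no minimal element, and no influence exists for this pair. Your own flagged ``delicate point'' (existence of a minimal influence via Zorn/well-ordering) is exactly where this fails: the family is not closed under intersections of chains. The same objection kills your suggested witness for \eqref{item:F1Cs} ($f_i$ depending on whether $x$ has finite or infinite support), since finiteness of support is also invariant under single-coordinate flips. The paper instead uses, for \eqref{item:F1Cs}, the network $f_1(x)=\bigwedge_{i\in\N}x_i$, $f_i(x)=x_i$ for $i\ge 2$ (here the minimal influence for $(x,1_\N)$ is exactly the zero-set of $x$, which exists but may be infinite), and for \eqref{item:CiF1Cs} a genuinely new example, $f_i(x)=x_i\lor\bigvee_{k>i}\neg x_k$, which satisfies \ref{item:F1} and \ref{item:Ci} but not \ref{item:Cs}. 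A correct proof of \eqref{item:CiF1Cs} cannot avoid constructing some such example: statement \eqref{item:CiF2Cs} shows any witness must fail \ref{item:F2}, and your density-type candidates all fail \ref{item:F1} as well.

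Your outline for \eqref{item:CiF2Cs} does contain the right idea and is close to the paper's argument: after first reducing \ref{item:Cs} to disjoint pairs $s,t$ (a step you should make explicit), one assumes $f_i(f^{(s)}(x))\ne f_i(x)$ for some $i\in t$, takes a finite influence $b$ of $f_i$ for $(x,f^{(s)}(x))$, notes $b\subseteq\Delta(x,f^{(s)}(x))\subseteq s$ by Lemma~\ref{lem:influence}\ref{item:influence_delta} so that $f^{(b)}(x)=(x_{Z\setminus b},f_b(x))$, and concludes $f^{(i,b)}\ne f^{(b,i)}$, contradicting \ref{item:Ci}. No ``window exhausting $Z$'' limit is needed. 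Your sketch for \eqref{item:F2CsF3} is plausible in spirit (the paper uses $\phi(x)=x_{a(x)+1}$ with $a(x)=\min\{j:x_j=1\}$, placed at a single coordinate with the identity elsewhere so that \ref{item:Cs} is automatic), but as written none of the three counterexamples is actually pinned down.
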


\begin{proof}
For the first two items, we only need to exhibit counterexamples for $Q = \{0,1\}$ and $Z = \N$ of the form $f_1(x) = \phi(x)$ and $f_i(x) = x_i$ for all $i \ge 2$, which always verify \ref{item:Cs}.

\eqref{item:F1Cs}. Let $\phi(x) = \bigwedge_{i \in \N} x_i$. We verify that $\phi$ satisfies \ref{item:F1} but does not satisfy \ref{item:F2}. Indeed, if $x \ne 1_{\N}$, say $x = (0_s, 1_t)$ and $y = 1_\N$, then $s$ is an influence for $(x,y)$ and any $i \in s$ is an influence for $(y,x)$.

%
%

\eqref{item:F2CsF3}. For any nonzero $x \in \{0,1\}^\N$, let $a(x) = \min\{ j : x_j = 1 \}$, and let 
\[
	\phi(x) = \begin{cases}
	0 &\text{if } x = 0_\N,\\
	x_{a(x) + 1} &\text{otherwise}.
	\end{cases}
\]
We verify that $\phi$ satisfies \ref{item:F2}; we shall use Lemma \ref{lem:influence}\ref{item:v} repeatedly. If $y \ne 0_\N$, then for any $x$, $\phi( x_{\N \setminus [a(y) + 1]}, y_{[a(y) + 1]} ) = \phi(y)$. If $y = 0_\N$, then for any $x \ne 0_\N$, $\phi( x_{ \N \setminus \{ a(x) + 1 \} }, y_{\{ a(x) + 1 \}} ) = \phi(y)$. So $\phi$ satisfies \ref{item:F2} but the case $y = 0_\N$ shows that it does not satisfy \ref{item:F3}.

\eqref{item:CiF1Cs}. Let $Q = \{0,1\}$, $Z = \N$, and for all $i \in \N$,
\[
	f_i(x) = x_i \lor \bigvee_{k > i} \neg x_k.
\]
In other words, $f_i(x) = x_i$ if $x_k=1$ for all $k > i$ and $f_i(x) = 1$ whenever there exists $k > i$ with $x_k = 0$. It is clear that $f$ satisfies \ref{item:F1}. We now prove that $f$ satisfies \ref{item:Ci}. Let $i,j \in \N$ be distinct and $x \in \{0,1\}^\N$. Clearly, if $i > j$, then $f_i(x) = f_i( f^{(j)}(x) )$, so suppose $i < j$. If there exists $k > j$ with $x_k = 0$, then $f_i(x) = f_i( f^{(j)}(x) ) = 1$; otherwise, $f^{(j)}(x) = x$ and $f_i(x) = f_i( f^{(j)}(x) )$ and we are done. We finally prove that $f$ does not satisfy \ref{item:Cs}. Let $x = 0_\N$, then $f_1(x) = 1$, while $f_1( f^{(\N \setminus 1)} (x) ) = f_1( 0, 1_{\N \setminus 1} ) = 0$.

\eqref{item:CiF2Cs}. We first show that \ref{item:Cs} is equivalent to: $f^{(\sigma, \tau)} = f^{(\tau, \sigma)}$ for all $\sigma, \tau \in \mathcal{P}(Z)$ with $\sigma \cap \tau = \emptyset$. Indeed, in the latter case, for any $s, t \in \mathcal{P}(Z)$, we have ${f^{(s \cap t, s \setminus t)}=f^{(s \setminus t,s \cap t)}=f^{(s)}}$ and symmetrically for $f^{(t)}$ so that
\[
	f^{(s,t)} = f^{(s \cap t, s \setminus t, t \setminus s, s \cap t )} = f^{(s \cap t, t \setminus s, s \setminus t, s \cap t )} = f^{(t,s)}.
\]

Now, suppose $f$ satisfies \ref{item:F2} and \ref{item:Ci}, but not \ref{item:Cs}. Then there exist $s,t$ and $x$ for which $s \cap t = \emptyset$ and $f^{(s,t)}_t(x) \ne f^{(t,s)}_t(x) = f_t(x)$. In particular, there exists $i \in t$ such that $f_i( f^{(s)} (x) )\ne f_i(x)$. Denoting $y = f^{(s)}(x)$, there exists a finite influence $b \subseteq \Delta(x,y) \subseteq s$ such that $f^{(b)}(x) = (x_{Z \setminus b}, y_b)$ verifies $f_i( f^{(b)}(x) ) = f_i(y) \ne f_i(x)$. This implies $f^{(i,b)} \ne f^{(b,i)}$, which is the desired contradiction.\qed
\end{proof}

\section{Commutativity and dynamical locality} \label{sec:dynamic_locality}

Let $X$ be a set, and $\alpha : X \to X$. A \textbf{cycle} of $\alpha$ is a finite sequence $x^1, \dots, x^l \in X$ such that $\alpha(x^i) = x^{i+1}$ for $1 \le i \le l - 1$ and $\alpha(x^l) = x^1$. The integer $l$ is the length of the cycle; the \textbf{period} of $\alpha$ is the least common multiple of all the cycle lengths of $\alpha$. (If $\alpha$ has no cycles, or if it has cycles of unbounded lengths, then its period is infinite.) The transient length of $x$ is the smallest $k \ge 0$, such that $\alpha^k(x)$ belongs to a cycle of $\alpha$. The \textbf{transient length} of $\alpha$ is the maximum over all transient lengths. (Again, if $\alpha$ has no cycles, or if $\alpha$ has unbounded transient lengths, then the transient length is infinite.)

Here is an example of $\alpha$ where every trajectory leads to a cycle, but $\alpha$ has infinite period and infinite transient length. Let $X = \N$, then for any prime number $p$, let 
\[
	\alpha(p^i) = \begin{cases}
		p^{i+1} &\text{if } 0 \le i \le 2p-2,\\
		p^p &\text{if } i = 2p-1,
	\end{cases}
\]
and $\alpha(n) = n$ for any other $n \in \N$. Then the trajectory of the prime number $p$ has transient length $p$ and period $p$.

We note that for any $m > n \ge 0$, $\alpha^m = \alpha^n$ if and only if $\alpha$ has transient length $\le n$ and period dividing $m - n$. In particular, any $\alpha : Q \to Q$ has transient length at most $q-1$ and period at most $q$. Thus, for $\pi_q := \lcm(1,2, \dots, q)$, any $\alpha : Q \to Q$ satisfies
\begin{equation} \label{equation:alpha}
	\alpha^{\pi_q + q - 1} = \alpha^{q-1}.
\end{equation}
Moreover, this is the minimum equation satisfied by all $\alpha : Q \to Q$, in the sense that any equation of the form $\alpha^m = \alpha^n$ for $m > n \ge 0$ must have $m \ge \pi_q + q - 1$ and $n \ge q-1$.
We then consider the dynamical property
\begin{enumerate}[label=\textbf{(D)}]
	\item \label{item:D} 
	$f$ has transient length at most $q-1$ and period at most $q$, i.e. $f^{\pi_q + q - 1} = f^{q-1}$.
\end{enumerate}
We shall refer to Property \ref{item:D} as being \textbf{dynamically local}, as it implies that $f$ behaves like a function $Q \to Q$. We naturally also consider its analogues for updates.
\begin{enumerate}[label=\textbf{(D\arabic*)}]
	\item \label{item:Di} 
	$f^{(i)}$ is dynamically local for all $i \in Z$.
	
	\item \label{item:Db} 
	$f^{(b)}$ is dynamically local for all $b \in \mathcal{FP}(Z)$.
	
	\item \label{item:Ds} 
	$f^{(s)}$ is dynamically local for all $s \in \mathcal{P}(Z)$.
\end{enumerate}

Property \ref{item:Di} is actually trivial, as it is satisfied by any network. Indeed, $f^{(i)}$ can be decomposed into a family of mappings from $Q$ to itself (one for each value of $x_{Z \setminus i}$): for any $a \in Q^{Z \setminus i}$, let $g^a : Q \to Q$ be $g^a(x_i) := f_i(x_i, a)$, then
\[
	f^{(i)^m}(x_i,a) = ( \left( g^a \right)^m (x_i), a) 
\]
for all $m \ge 1$, thus $f^{(i)}$ verifies Equation \eqref{equation:alpha}. 

%
%
%

\begin{theorem} \label{th:CD}
Commutativity and dynamical locality are related as follows.
\begin{align}
	\label{item:CsDs}
	\ref{item:Cs} &\implies \ref{item:Ds}.\\
	\label{item:CiDb}
	\ref{item:Ci} &\implies \ref{item:Db}.\\
	\label{item:CiD}
	\ref{item:Ci} &\notimplies \ref{item:D}.\\
	\label{item:CiDs}
	\ref{item:Ci} &\vdash \ref{item:D} \notimplies \ref{item:Ds}.
\end{align}
\end{theorem}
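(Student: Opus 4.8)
The plan is to handle the four assertions by distinct means: $\ref{item:Ci}\implies\ref{item:Db}$ and $\ref{item:Cs}\implies\ref{item:Ds}$ by decomposing updates into commuting single‑coordinate updates, and the two separations by explicit networks built from the density functional $\delta$. For $\ref{item:Ci}\implies\ref{item:Db}$, fix a finite $b=\{b_1,\dots,b_k\}$. The first step is to note that for a locally commutative $f$ one has $f^{(b)}=f^{(b_1,\dots,b_k)}=f^{(b_k)}\circ\cdots\circ f^{(b_1)}$: writing $f=f^I$ for a sequential enumeration $I$ of $Z$ beginning with $b_1,\dots,b_k$ (Theorem~\ref{th:equivalence_commutative}, $\ref{item:Ci}\iff\ref{item:Cib}$) gives $f_{b_\tau}=f^{(b_1,\dots,b_\tau)}_{b_\tau}$ for $\tau\le k$, and since each $b_\tau$ is updated exactly once in the sequence $b_1,\dots,b_k$ the two maps coincide. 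By \ref{item:Ci} the updates $f^{(b_1)},\dots,f^{(b_k)}$ pairwise commute, hence $(f^{(b)})^m=(f^{(b_1)})^m\circ\cdots\circ(f^{(b_k)})^m$ for all $m$; each $f^{(b_\ell)}$ splits coordinatewise into maps $Q\to Q$ and so satisfies $(f^{(b_\ell)})^{\pi_q+q-1}=(f^{(b_\ell)})^{q-1}$ by Equation~\eqref{equation:alpha} (Property~\ref{item:Di}). Substituting $m=\pi_q+q-1$ and $m=q-1$ into the product gives $(f^{(b)})^{\pi_q+q-1}=(f^{(b)})^{q-1}$, i.e. \ref{item:Db}.

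For $\ref{item:Cs}\implies\ref{item:Ds}$, since $f^{(s)}$ is globally commutative whenever $f$ is, it suffices to prove $\ref{item:Cs}\implies\ref{item:D}$. Let $f$ be globally commutative, fix $i\in Z$ and $x\in Q^Z$, and set $g:=f^{(i)}$, $h:=f^{(Z\setminus i)}$. From the enumeration formulations of \ref{item:Cs} I would extract: (1)~$f=g\circ h=h\circ g$; (2)~$f_i=f_i\circ h^k$ for all $k$ (from $f=g\circ h$); (3)~from $\ref{item:Cs}\implies\ref{item:Ci}$, that replacing coordinate $i$ of any $w$ by $f_i(w)$ leaves $f_{Z\setminus i}(w)$ unchanged. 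Put $\theta(a):=f_i(a,x_{Z\setminus i}):Q\to Q$, $e_m:=\theta^m(x_i)$, $d_k:=h^k(x)_{Z\setminus i}$, so that $g^m(x)=(e_m,d_0)$ and $h^k(x)=(e_0,d_k)$. The core is to prove, by induction on $k$ with an inner induction on $m$ driven by (3), that $h^k(g^m(x))=(e_m,d_k)$, that $f_{Z\setminus i}(e_m,d_k)=d_{k+1}$, and that $f_i(e_m,d_k)=e_{m+1}$, for all $m,k$ (the step uses (1) for the first two and (2) for the third). It then follows by induction on $m$ that $f^m(x)=(e_m,d_m)$, whence $f^m(x)_i=\theta^m(x_i)$; as $\theta:Q\to Q$ obeys $\theta^{\pi_q+q-1}=\theta^{q-1}$ by \eqref{equation:alpha} and $i,x$ were arbitrary, $f^{\pi_q+q-1}=f^{q-1}$.

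The two separations use a functional invariant under single flips but not under the parallel update. For $\ref{item:Ci}\notimplies\ref{item:D}$, take $Q=\{0,1\}$, $Z=\N$, and $f(x)=1_\N$ if $\delta(x)=1/2$, $f(x)=0_\N$ otherwise. A single bit flip does not change $\delta$, so each $f^{(i)}$ merely sets coordinate $i$ to the bit prescribed by $\delta(x)$; hence $f^{(i)}$ and $f^{(j)}$ commute and $f$ is locally commutative, yet the orbit $(1,0,1,0,\dots)\mapsto 1_\N\mapsto 0_\N\mapsto 0_\N$ has transient length $2>q-1$, so $f^{\pi_2+1}\ne f^1$. For $\ref{item:Ci}\vdash\ref{item:D}\notimplies\ref{item:Ds}$, I would make this bad behaviour occur only as a single update of a trivial network: let $Z=A\sqcup B$ with $A,B$ copies of $\N$, $Q=\{0,1\}$, write a configuration as $(y,z)$, set $f_j(y,z)=1$ for $j\in B$ and, for $i\in A$, $f_i(y,z)=\one{\delta(y)=1/2}$ if $\delta(z)=1/2$ and $f_i(y,z)=y_i$ otherwise. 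Invariance of $\delta$ under single flips, together with the $B$‑update being constant and the $A$‑update reading only $\delta$, yields local commutativity; after one step the $B$‑part equals $1_B$ (density $1\ne 1/2$), so $f_A(\cdot,1_B)=\mathrm{id}$ and $f^2=f$, whence $f$ is dynamically local; but $f^{(A)}(y,z)=(f_A(y,z),z)$ restricted to $\{z:\delta(z)=1/2\}$ is the network of the previous sentence, with transient length $2$, so \ref{item:Ds} fails.

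The main obstacle is the induction behind $\ref{item:Cs}\implies\ref{item:D}$: one must show that iterating $f$ acts on a fixed coordinate exactly like iterating a single map $Q\to Q$, and this forces a careful interleaving of the ``$e$‑side'' and the ``$d$‑side'' of the bookkeeping so that the argument does not become circular (the eventual periodicity of $(d_k)_k$ should appear only as an output, not be assumed). The separations are routine once one has the idea of using $\delta$; the only point needing care is ensuring, in the second one, that destroying the density feature on the $B$‑part does not itself break local commutativity, which is exactly why the $B$‑update must be constant.
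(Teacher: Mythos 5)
Your proposal is correct, and two of its four parts coincide with the paper's proof: the argument for \eqref{item:CiDb} (decompose $f^{(b)}$ into the pairwise-commuting $f^{(b_1)},\dots,f^{(b_k)}$, push the exponent onto each factor, and invoke the trivially satisfied \ref{item:Di}) is exactly the paper's, as is the reduction of \eqref{item:CsDs} to $\ref{item:Cs}\implies\ref{item:D}$ via closure of \ref{item:Cs} under partial updates. For the core of \eqref{item:CsDs} the underlying idea is again the same -- commute all the $i$-updates past the $(Z\setminus i)$-updates so that coordinate $i$ evolves under a single map $Q\to Q$ -- but the paper executes it in one line of update-word algebra, $f^m_i=\bigl(f^{(i,Z\setminus i)^m}\bigr)_i=\bigl(f^{(i)^m(Z\setminus i)^m}\bigr)_i=f^{(i)^m}_i$, whereas your $e_m/d_k$ double induction rederives this by hand; your bookkeeping is sound (the key facts $h^kg^m=g^mh^k$ and the coordinate projections give $h^k(g^m(x))=(e_m,d_k)$ directly), just heavier than necessary. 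Where you genuinely diverge is in \eqref{item:CiD} and \eqref{item:CiDs}: the paper defers both to Theorem~\ref{th:IC}, where it proves the stronger separations $\ref{item:Ci}\vdash\ref{item:Ii}\notimplies\ref{item:D}$ and $\ref{item:Ci}\vdash\ref{item:Ii}\land\ref{item:I}\notimplies\ref{item:Ds}$ using partitions of $\N$ into parts of density $2^{-\omega}$ with density-threshold rules; your counterexamples (the $\delta(x)=1/2$ indicator network with its length-$2$ transient, and its two-block $A\sqcup B$ relativisation in which the bad behaviour survives only inside the partial update $f^{(A)}$) are simpler and self-contained, and they do verify \ref{item:Ci}, \ref{item:D} and the failure of \ref{item:D}, resp.\ \ref{item:Ds}, as claimed -- at the cost of not yielding the additional idempotence conclusions the paper extracts from its constructions.
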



If $f : Q^Z \to Q^Z$ is dynamically local, then the following are equivalent: $f$ is bijective; $f$ is injective; $f$ is surjective; $f^{\pi_q} = \id$. We thus consider the property
\begin{enumerate}[label=\textbf{(B)}]
	\item  \label{item:B}
	$f$ is bijective.
\end{enumerate}
Again, we consider its counterparts for updates.
\begin{enumerate}[label=\textbf{(B\arabic*)}]
	\item \label{item:Bi} 
	$f^{(i)}$ is bijective for all $i \in Z$.
	
	\item \label{item:Bb} 
	$f^{(b)}$ is bijective for all $b \in \mathcal{FP}(Z)$.
	
	\item \label{item:Bs} 
	$f^{(s)}$ is bijective for all $s \in \mathcal{P}(Z)$.
\end{enumerate}

By using dynamical locality and similar arguments to those used in the proof of Theorem \ref{th:CD}, we obtain that all versions of bijection are equivalent for globally commutative networks; however, locally commutative networks are not so well behaved.

%
%
%

\begin{theorem}
  \label{theo:CB}
  Bijection properties and commutativity properties are related as follows.
\begin{align}
	\label{item:CsB}
	\ref{item:Cs} &\vdash \ref{item:B} \iff \ref{item:Bi} \iff \ref{item:Bb} \iff \ref{item:Bs}.\\
	\label{item:CiB} 
	\ref{item:Ci} &\vdash \ref{item:Bs} \implies \ref{item:B} \implies \ref{item:Bi} \iff \ref{item:Bb}.\\
	\label{item:CinotBs}
	\ref{item:Ci} &\vdash \ref{item:B} \land \ref{item:Ds}  \notimplies \ref{item:Bs}. \\
	\label{item:CinotB}
	\ref{item:Ci} &\vdash \ref{item:Bi} \land \ref{item:Ds} \notimplies \ref{item:B}.
\end{align}
\end{theorem}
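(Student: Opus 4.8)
# Proof proposal for Theorem \ref{theo:CB}

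The plan is to establish the four lines in order, leaning on dynamical locality (Theorem \ref{th:CD}) to convert bijectivity into the clean identity $f^{\pi_q} = \id$, and then to build explicit counterexamples for the negative implications.

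For line \eqref{item:CsB}: under \ref{item:Cs} we have \ref{item:Ds} by \eqref{item:CsDs}, so \emph{every} update $f^{(s)}$ is dynamically local. By the remark preceding the statement, for a dynamically local map bijectivity is equivalent to $f^{\pi_q} = \id$. The implications $\ref{item:Bs} \implies \ref{item:Bb} \implies \ref{item:Bi} \implies \ref{item:B}$ are either immediate ($Z \in \mathcal{P}(Z)$, singletons are finite subsets) or follow by decomposing $f$ into commuting updates; the point is the reverse direction. Here I would use that under \ref{item:Csa} one can write, for any $s$, an enumeration starting with $s$, so that $f^{(s)}$ is a ``factor'' of $f$ in a way compatible with iteration: concretely, if $f$ is bijective then $f^{\pi_q} = \id$, and using global commutativity one shows $(f^{(s)})^{\pi_q}$ agrees with the restriction of $f^{\pi_q}$, forcing $(f^{(s)})^{\pi_q} = \id$, hence $f^{(s)}$ bijective. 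This gives $\ref{item:B} \implies \ref{item:Bs}$ and closes the cycle.

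For line \eqref{item:CiB}: the implication $\ref{item:Bb} \implies \ref{item:Bi}$ is trivial, and $\ref{item:Bi} \implies \ref{item:Bb}$ follows because under \ref{item:Ci} (hence \ref{item:Cb}) a finite-block update $f^{(b)}$ with $b=\{b_1,\dots,b_k\}$ equals $f^{(b_1)}\circ\cdots\circ f^{(b_k)}$, a composition of bijections. The implication $\ref{item:B} \implies \ref{item:Bi}$: restrict $f$ appropriately; since \ref{item:Di} always holds, $f^{(i)}$ decomposes into maps $g^a : Q \to Q$, and bijectivity of $f$ forces each $g^a$ to be a permutation (otherwise one builds a collision for $f$ using local commutativity to move the $i$-coordinate independently). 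Finally $\ref{item:Bs} \implies \ref{item:B}$ is immediate since $Z \in \mathcal{P}(Z)$ and $f = f^{(Z)}$.

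The negative lines \eqref{item:CinotBs} and \eqref{item:CinotB} require explicit locally commutative networks, and I expect these to be the main obstacle — one must find $f$ that is globally ``nice'' (bijective, or all singleton updates bijective, plus all updates dynamically local) yet has a bad infinite-support update. The natural template is $Q=\{0,1\}$, $Z=\N$, with $f$ acting through a ``shift-like'' or ``parity-propagation'' rule among infinitely many coordinates whose restriction to any finite block is a permutation but whose action on $s = 2\N$ (say) is not injective or not dynamically local; the verification of \ref{item:Ci} will mirror the computations in the proof of \eqref{item:CiF1Cs}, checking $f^{(i,j)} = f^{(j,i)}$ coordinatewise by case analysis on whether the ``later'' coordinate is updated. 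For \eqref{item:CinotB} one wants each $f^{(i)}$ bijective and all $f^{(s)}$ dynamically local but $f$ itself not injective: a candidate is to let $f$ flip $x_i$ according to some condition on $\{x_k : k>i\}$ so that each single coordinate is a bijection of $Q$, while the global map collapses two configurations that differ in infinitely many places. The delicate part is simultaneously arranging \ref{item:Ds} (every update of period $\le q$ and transient $\le q-1$) together with non-injectivity of $f$, since dynamical locality of $f^{(Z)} = f$ would contradict that — so the counterexample must exploit that \ref{item:Ds} as stated quantifies over $s \in \mathcal{P}(Z)$ but the \emph{relevant} failure is for $s = Z$ in \eqref{item:CinotBs}, whereas in \eqref{item:CinotB} we only demand \ref{item:Ds} and $\ref{item:Bi}$, not \ref{item:B}, so here one needs $f$ dynamically local yet non-bijective, impossible — hence I must re-read: \eqref{item:CinotB} asserts $\ref{item:Bi} \land \ref{item:Ds} \notimplies \ref{item:B}$, and since \ref{item:Ds} includes $f = f^{(Z)}$ dynamically local, non-bijectivity of a dynamically local $f$ is allowed (dynamical locality does not imply bijectivity). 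So the genuine task is: a locally commutative $f$, dynamically local in all updates, with every singleton update a permutation, yet $f$ not surjective — e.g. an ``absorbing'' rule that is locally reversible but globally contracts onto a proper subset. Constructing this consistently with \ref{item:Ci} is where the real work lies.
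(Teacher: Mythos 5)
Your treatment of the two positive lines \eqref{item:CsB} and \eqref{item:CiB} follows essentially the paper's route: reduce bijectivity of a dynamically local map to the identity $f^{\pi_q}=\id$, use commutation to isolate the coordinate(s) being updated (the paper does this coordinatewise, e.g. $f_i^{\pi_q}=f_i^{(i)^{\pi_q}(Z\setminus i)^{\pi_q}}=f_i^{(Z\setminus i)^{\pi_q}}=\id_i$), write $f^{(b)}$ as a finite composition of singleton updates for $\ref{item:Bi}\implies\ref{item:Bb}$, and derive a collision of $f$ from a collision of $f^{(i)}$ via $f_j(x)=f_j(f^{(i)}(x))$ for $\ref{item:B}\implies\ref{item:Bi}$. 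These sketches are correct in outline, although ``decomposing $f$ into commuting updates'' is not by itself a proof of $\ref{item:Bi}\implies\ref{item:B}$ when $Z$ is infinite (an infinite composition of bijections need not be bijective); you do need the coordinatewise $f^{\pi_q}=\id$ argument you allude to.

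The genuine gap is that lines \eqref{item:CinotBs} and \eqref{item:CinotB} are not proved: you explicitly stop at ``constructing this consistently with \ref{item:Ci} is where the real work lies,'' and the candidates you gesture at (shift-like or parity-propagation rules) cannot work. Indeed, such rules have finite influences, i.e. satisfy \ref{item:F2}; by Theorem~\ref{th:finiteness}\eqref{item:CiF2Cs}, $\ref{item:Ci}\land\ref{item:F2}\implies\ref{item:Cs}$, and then your own line \eqref{item:CsB} makes all bijectivity notions equivalent, so no counterexample can arise that way. The construction the paper uses is the density trick already present in the proof of \eqref{item:CiCb}: since $\delta$ is invariant under modification of finitely many coordinates, any rule gated on $\delta(x)$ automatically commutes with singleton updates. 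Concretely, for \eqref{item:CinotB} take $Q=\{0,1\}$, $Z=\N$ and $f(x)=\neg x$ if $\delta(x)=0$, $f(x)=x$ otherwise: each $f^{(i)}$ is an involution (hence \ref{item:Bi}), every $f^{(s)}$ is dynamically local, yet $f(0_\N)=1_\N=f(1_\N)$. For \eqref{item:CinotBs} take $f(x)=\neg x$ if $\delta(x)\in\{0,1\}$ and $f(x)=x$ otherwise: $f$ is an involution (hence \ref{item:B}) and satisfies \ref{item:Ds}, but for $s$ of density $1/2$ the configuration $y=(1_s,0_{Z\setminus s})$ satisfies $f^{(s)}(0_\N)=y=f^{(s)}(y)$, so $f^{(s)}$ is not injective. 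Without some such explicit construction, half the theorem remains unproved.
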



\section{Commutativity and idempotence} \label{sec:idempotence}

Let us now strengthen the commutativity properties as follows.
\begin{enumerate}[label=\textbf{(IC\arabic*)}]
	\item \label{item:ICi}
	$f^{(i,j)} = f^{( \{ i,j \} )}$ for all $i,j \in Z$.

	\item \label{item:ICb}
	$f^{(b,c)} = f^{( b \cup c )}$ for all $b,c \in \mathcal{FP}(Z)$.

	\item \label{item:ICs}
	$f^{(s,t)} = f^{( s \cup t )}$ for all $s,t \in \mathcal{P}(Z)$.
\end{enumerate}
Intuitively, \ref{item:ICi} means that updating $i$ and $j$ in series is equivalent to updating them in parallel; \ref{item:ICb} and \ref{item:ICs} then extend this property to updates of finite blocks and to any updates, respectively. This is closely related to \textbf{idempotence}:
\begin{enumerate}[label=\textbf{(I)}]
	\item \label{item:I} 
	$f^2 = f$.
\end{enumerate}
Dynamically, idempotence means that $Q^Z$ is partitioned into gardens of Eden of $f$ (configurations $y$ such that $f^{-1}(y) = \emptyset$) and fixed points of $f$ (configurations $z$ such that $f(z) = z$). Again, we consider the counterparts of idempotence to updates.
\begin{enumerate}[label=\textbf{(I\arabic*)}]
	\item \label{item:Ii} 
	$f^{(i)^2} = f^{(i)}$ for all $i \in Z$.
	
	\item \label{item:Ib} 
	$f^{(b)^2} = f^{(b)}$ for all $b \in \mathcal{FP}(Z)$.
	
	\item \label{item:Is} 
	$f^{(s)^2} = f^{(s)}$ for all $s \in \mathcal{P}(Z)$.
\end{enumerate}

For globally commutative networks, all four notions of idempotence are equivalent. This is far to be the case for locally commutative networks instead.

%
%
%
%
%
%

\begin{theorem} \label{th:IC}
Idempotence properties and commutativity properties are related as follows.
\begin{align}
	\label{item:ICsCs}
	\ref{item:Cs} &\vdash  \ref{item:I} \iff \ref{item:Ii} \iff \ref{item:Is} \iff \ref{item:ICs}.\\
	\label{item:ICiICb}
	\ref{item:Ci} &\vdash \ref{item:Ii} \iff \ref{item:ICi} \iff \ref{item:Ib} \iff \ref{item:ICb}.\\
	\label{item:ICiD}
	\ref{item:Ci} &\vdash \ref{item:Ii}  \notimplies \ref{item:D}.\\
	\label{item:CiIDsIi}
	\ref{item:Ci} &\vdash \ref{item:I} \land \ref{item:Ds} \notimplies \ref{item:Ii}.\\
	\label{item:ICiI}
	\ref{item:Ci} &\vdash \ref{item:Ii} \land \ref{item:I} \notimplies \ref{item:Ds}.\\
	\label{item:CiIiDs}
	\ref{item:Ci} &\vdash \ref{item:Ii} \land \ref{item:Ds} \notimplies \ref{item:I}.\\
	\label{item:CiIsCs}
	\ref{item:Ci} &\vdash \ref{item:Is} \notimplies \ref{item:Cs}.
\end{align}
\end{theorem}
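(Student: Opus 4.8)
The statement is a long list of implications and separations. Items \eqref{item:ICsCs} and \eqref{item:ICiICb} are the "positive" parts and should be handled by the same style of argument already used in Theorems \ref{th:CD} and \ref{theo:CB}: reduce the global/finite statements to local ones using commutativity, then use dynamical locality to collapse the behaviour to that of a single map $Q \to Q$. Concretely, for \eqref{item:ICsCs} I would argue $\ref{item:Cs}\vdash \ref{item:Is}\iff\ref{item:ICs}$ by noting that, under \ref{item:Cs}, $f^{(s)}$ is globally commutative (remark after the definition of \ref{item:Cs}), so $f^{(s,s)}=f^{(s)^2}$ and by \ref{item:Ds} (Theorem \ref{th:CD}\eqref{item:CsDs}) each $f^{(s)}$ is dynamically local; then $f^{(s)^2}=f^{(s)}$ for all $s$ is equivalent to $f^{(s,t)}=f^{(s\cup t)}$ by splitting $s\cup t$ into disjoint pieces as in the proof of Theorem \ref{th:finiteness}\eqref{item:CiF2Cs}, and $\ref{item:I}\iff\ref{item:Ii}$ follows because $f=f^{(Z)}$ and $f^{(i)^2}=f^{(i)}$ forces, via commutativity and dynamical locality, every block update to be idempotent. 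For \eqref{item:ICiICb} the same mechanism applies but only to finite blocks: $\ref{item:Ci}\vdash$ each $f^{(b)}$ is dynamically local by Theorem \ref{th:CD}\eqref{item:CiDb}, and $f^{(i,j)}=f^{(j,i)}$ together with idempotence of each $f^{(i)}$ propagates to idempotence of $f^{(b)}$ and to $f^{(b,c)}=f^{(b\cup c)}$ by induction on $|b\cup c|$, decomposing one element at a time.

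For the separations I would build explicit Boolean networks over $Z=\N$ (reusing the template $f_1(x)=\phi(x)$, $f_i(x)=x_i$ for $i\ge 2$ whenever possible, since this automatically gives \ref{item:Cs} and hence all commutativity, and makes $f$, $f^{(i)}$, $f^{(b)}$, $f^{(s)}$ easy to analyse). Item \eqref{item:ICiD}: take $\phi$ with $f_1^2=f_1$ but $f_1$ not of the form forced by \ref{item:D}; e.g. a non-identity idempotent $\phi$ depending on infinitely many coordinates, say $\phi(x)=\bigwedge_{i\ge 2}x_i$ — then $f^{(1)^2}=f^{(1)}$ so \ref{item:Ii} holds, but $f$ itself need not satisfy $f^{\pi_q+q-1}=f^{q-1}$ because updating coordinate $1$ and then some $x_i$ to $0$ in a later coordinate can reactivate $\phi$, breaking dynamical locality of $f$ (one checks a suitable orbit). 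Items \eqref{item:CiIDsIi}, \eqref{item:ICiI}, \eqref{item:CiIiDs} require genuinely locally-but-not-globally commutative examples, so I would adapt the network $f_i(x)=x_i\lor\bigvee_{k>i}\neg x_k$ from the proof of Theorem \ref{th:finiteness}\eqref{item:CiF1Cs}, or variants of it, tuning them so that the single-site and finite-block updates are idempotent (or dynamically local) while the whole-map or infinite-block behaviour fails the corresponding property; the orbit $x=0_\N$ under $f^{(\N\setminus 1)}$ versus $f$ already exhibits the kind of discrepancy needed. Item \eqref{item:CiIsCs}: here I need a network where \emph{every} update (including infinite ones) is idempotent but the updates do not commute; this means each $f^{(s)}$ partitions $Q^Z$ into fixed points and gardens of Eden, yet $f^{(s,t)}\ne f^{(t,s)}$ for some infinite $s,t$ — I would look for an $f$ that is a "projection-like" map whose image depends on $s$ in a direction-sensitive way, again in the spirit of the density example $f(x)=1_Z$ if $\delta(x)\le 1/2$ but massaged to be idempotent on every block (e.g. forcing $f$ to fix $0_Z$ and $1_Z$ and be idempotent elsewhere).

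The main obstacle is item \eqref{item:CiIsCs}: satisfying \ref{item:Is} (idempotence of \emph{all} updates, including infinite ones) is a strong constraint — it is close to, but weaker than, \ref{item:ICs}, which under \ref{item:Cs} would force global commutativity — so the counterexample must be locally commutative, have all block updates idempotent, and still fail global commutativity, and these pull in opposite directions. The construction will likely need each $f^{(s)}$ to act as the identity on a large "absorbing" set while moving the rest in one step onto that set (so $f^{(s)^2}=f^{(s)}$ automatically), with the absorbing set and the one-step image chosen asymmetrically in a way analogous to the density example, so that $f^{(s)}$ then $f^{(t)}$ lands in a different place than $f^{(t)}$ then $f^{(s)}$. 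Verifying idempotence of the infinite updates $f^{(s)}$ for \emph{all} $s$ simultaneously is the delicate point and where I expect most of the work to go; the finite separations and the positive items \eqref{item:ICsCs}–\eqref{item:ICiICb} should be comparatively routine given the machinery already developed.
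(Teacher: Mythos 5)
There is a genuine gap: the only concrete counterexample you propose, for \eqref{item:ICiD}, does not work. With the template $f_1(x)=\phi(x)$ and $f_i(x)=x_i$ for $i\ge 2$, no update ever changes a coordinate other than $1$, so your suggested mechanism (``updating coordinate $1$ and then some $x_i$ to $0$ in a later coordinate can reactivate $\phi$'') cannot occur; in fact any network of this form decomposes into a family of maps $g^a:Q\to Q$ acting on coordinate $1$ alone and therefore always satisfies \ref{item:D} and \ref{item:Ds}, exactly as in the discussion of \ref{item:Di} in Section~\ref{sec:dynamic_locality}. To separate \ref{item:Ii} from \ref{item:D} you need infinitely many coordinates that genuinely move: the paper partitions $\N$ into parts $Z_\omega$ of density $2^{-\omega}$ and lets $f_{Z_\omega}$ depend only on $\delta(x)$ with threshold $1-2^{1-\omega}$, so that single-site updates never change the density (giving \ref{item:Ci} and \ref{item:Ii}) while the parallel orbit of $0_Z$ climbs through densities $0,\tfrac12,\tfrac34,\dots$ forever. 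The remaining separations \eqref{item:CiIDsIi}--\eqref{item:CiIsCs} are left in your plan as ``adapt and tune,'' which is where the actual content lies; note in particular that \eqref{item:CiIsCs}, which you flag as the delicate one, is settled in the paper by a two-line density example ($f_1(x)=x_1$ if $\delta(x)=1$ and $0$ otherwise, with $f_i\equiv 1$ for $i\ge 2$) rather than by any elaborate absorbing-set construction.

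On the positive items your plan is close in spirit but takes an unnecessary detour through dynamical locality. The paper's proof of \eqref{item:ICsCs} is purely combinatorial block-rewriting: with $u=s\cap t$, the chain $f^{(s,t)}=f^{(s\setminus t,u,u,t\setminus s)}=f^{(s\setminus t,u,t\setminus s)}=f^{(s\cup t)}$ gives $\ref{item:Cs}\land\ref{item:Is}\implies\ref{item:ICs}$; the identity $f^{(s)^2}_s=f^{(s,s,Z\setminus s,Z\setminus s)}_s=f^2_s$ gives $\ref{item:I}\implies\ref{item:Is}$; and $f^{(s)^2}_i=f^{(i)^2(s\setminus i)^2}_i=f^{(i)}_i$ gives $\ref{item:Ii}\implies\ref{item:Is}$. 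Dynamical locality ($\alpha^{\pi_q+q-1}=\alpha^{q-1}$) does not by itself yield idempotence ($\alpha^2=\alpha$), so invoking Theorem~\ref{th:CD} would not close those steps. Your sketch of \eqref{item:ICiICb} (reorder singletons via \ref{item:Cib}, collapse duplicates via \ref{item:Ii}) does match the paper's argument.
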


\begin{proof}
\eqref{item:ICsCs}. Clearly, $\ref{item:ICs} \implies \ref{item:Cs} \land \ref{item:Is}$. We first prove that $\ref{item:Cs} \land \ref{item:Is} \implies \ref{item:ICs}$. For any $s,t \in \mathcal{P}(Z)$, let $u = s \cap t$, then we have
\[
	f^{(s,t)} = f^{(s \setminus t, u, u, t \setminus s)} = f^{(s \setminus t, u, t \setminus s)} = f^{(s \cup t)}.
\]
We now prove that $\ref{item:Cs} \land \ref{item:I} \implies \ref{item:Is}$: for any $s \in \mathcal{P}(Z)$, we have
\[
	f^{(s)^2}_s = f^{(s, s, Z\setminus s, Z \setminus s)}_s = f^2_s = f_s,
\]
and hence $f^{(s)^2} = f^{(s)}$. We finally prove that $\ref{item:Cs} \land \ref{item:Ii} \implies \ref{item:Is}$: for any $s \in \mathcal{P}(Z)$ and any $i \in s$, we have
	${f^{(s)^2}_i = f^{ (i)^2 (s \setminus i)^2 }_i = f^{(i)^2}_i = f^{(i)}_i = f_i}$,
and hence $f^{(s)^2} = f^{(s)}$.

\eqref{item:ICiICb}. Clearly, $\ref{item:ICb} \implies \ref{item:ICi} \implies \ref{item:Ci} \land \ref{item:Ii}$ on the one hand and $\ref{item:ICb} \implies \ref{item:Cb} \land \ref{item:Ib}$ on the other. We now prove $\ref{item:Ci} \land \ref{item:Ii} \implies \ref{item:ICb}$. Let $f$ satisfy $\ref{item:Ci} \land \ref{item:Ii}$, and let $b, c \in \mathcal{FP}(Z)$. We denote $b \setminus c = \{ b_1, \dots,  b_k \}$, $c \setminus b = \{ c_1, \dots, c_l \}$ and $b \cap c = \{ d_1, \dots d_m \}$. Then, by \ref{item:Cbb},
\begin{align*}
	f^{(b,c)} &= f^{(b_1, \dots, b_k, d_1, \dots, d_m, c_1 \dots, c_l, d_1, \dots, d_m)}\\
	& = f^{(b_1, \dots, b_k, c_1 \dots, c_l, d_1, d_1, \dots, d_m, d_m)} \\
	&= f^{(b_1, \dots, b_k, c_1 \dots, c_l, d_1, \dots, d_m)} \\
	&= f^{(b \cup c)}.
\end{align*}

\medskip

For the remaining claims, we let $Q = \{0,1\}$ and $Z = \N$.

\eqref{item:ICiD}. Split $Z$ into parts $\{ Z_\omega : \omega \in \N \}$ of densities $2^{-\omega}$, and for any $\omega \in \N$, let 
\[
	f_{Z_\omega}(x) = \begin{cases}
		1_{Z_\omega} &\text{if } \delta(x) \ge 1 - 2^{1 -\omega},\\
		0_{Z_\omega} &\text{otherwise}.
	\end{cases}
\]
It is easy to verify that $f$ satisfies \ref{item:Ci} and \ref{item:Ii}. However, the initial configuration $x = 0_Z$ has an infinite trajectory, hence $f$ does not satisfy \ref{item:D}.

\eqref{item:CiIDsIi}. Let
\[
	f_1(x) = \begin{cases}
	\neg x_1 &\text{if } \delta(x) = 0,\\
	x_1 & \text{otherwise}.
	\end{cases}
\]
and $f_i(x) = 1$ for all $i \ge 2$. It is easy to verify that $f$ satisfies \ref{item:Ci}, \ref{item:I}, and \ref{item:Ds}. However, $f^{(1)}(0_\N) = (1, 0_{\N \setminus 1})$ and $f^{(1)^2}(0_\N) = 0_\N$, hence $f$ does not satisfy \ref{item:Ii}.

\eqref{item:ICiI}. Split $Z$ as above, and this time $f_{Z_1} = 0_{Z_1}$ and for any $\omega \ge 2$,
\[
	f_{Z_\omega}(x) = \begin{cases}
		1_{Z_\omega} &\text{if } \delta(x) > 1 - 2^{1 -\omega},\\
		x_{Z_\omega} &\text{otherwise}.
	\end{cases}
\]
Again, it is easy to verify that $f$ satisfies \ref{item:Ci} and \ref{item:Ii}. Moreover, $\delta( f(x) ) \le 1/2$ for any $x \in Q^Z$, hence $f(x)$ is fixed, thus $f$ satisfies \ref{item:I}. On the other hand, if $s = Z \setminus Z_1$, then $f^{(s)}$ has infinite trajectory for $x$ such that $x_{Z_1} = 1_{Z_1}$ and $\delta( x_{Z_\omega} ) = 2^{-\omega - 1}$ for all $\omega \ge 2$.

\eqref{item:CiIiDs}. Let 
\[
	f(x) = \begin{cases}
		1_\N & \text{if } \delta(x) = 0,\\
		0_\N & \text{if } \delta(x) = 1,\\
		x &\text{otherwise}.
	\end{cases}
\]
Then it is clear that $f$ satisfies \ref{item:Ci}, \ref{item:Ii} and \ref{item:Ds}; on the other hand, $f(0_\N) = 1_\N$ and $f^2(0_\N) = 0_\N$, hence $f$ is not idempotent.

\eqref{item:CiIsCs}. Let
\begin{align*}
	f_1(x) &= \begin{cases}
	x_1 &\text{if } \delta(x) = 1,\\
	0 &\text{otherwise}.
	\end{cases}\\
	f_i(x) &= 1 \qquad \forall i \ge 2.
\end{align*}
Then it is easy to verify that $f$ satisfies \ref{item:Ci} and \ref{item:Is}, but not \ref{item:Cs}.\qed
\end{proof}

\section{Globally commutative Boolean networks} \label{sec:classification}

There are a plethora of globally commutative networks over non-Boolean alphabets. For instance, for $q=4$, consider the following construction. Let $f$ be a Boolean network, and view $Q = \{0,1\}^2 = \{a = (a^1, a^2) : a^1, a^2 \in \{0,1\} \}$, then the quaternary network $g$ given by $g(x^1, x^2) = (f(x^2), x^2 )$ satisfies \ref{item:Cs}. This can be easily generalised for any $q \ge 4$. For $q = 3$, let $f$ be any Boolean network such that $f_i$ does not depend on $x_i$ for any $i$. Then let $g$ be the ternary network defined by
\[
	g_i(x) = \begin{cases}
	2 &\text{if } x_i = 2\\
	f_i( \hat{x} ) &\text{otherwise,}
	\end{cases}
\]
where $\hat{x}_i = \lfloor x_i/2 \rfloor$ for all $i$. Then it is easy to check that $g$ satisfies \ref{item:Cs}. However, we can classify globally commutative Boolean networks (i.e. networks with $Q = \{0,1\}$ and that satisfy \ref{item:Cs}). Before we give our classification, we need the following concepts and notation. First, for any $x \in \{0,1\}^Z$ and any $s \subseteq Z$, we denote $\overline{x^s} = (\neg x_s, x_{Z \setminus s})$.

The \textbf{transition graph} of $f$ is the directed graph $\Gamma(f)$ with vertex set $\{0,1\}^Z$ and an arc for every pair $(x , f^{(s)}(x) )$ for any $x \in \{0,1\}^Z$ and $s \in \mathcal{P}(Z)$. We remark that $\Gamma(f)$ completely determines $f$.

A \textbf{subcube} of $\{0,1\}^Z$ is any set of the form $\X[s, \alpha] := \{ x \in \{0,1\}^Z, x_s = \alpha \}$ for some $s \subseteq Z$ and $\alpha \in \{0,1\}^s$. A family of subcubes $X = \{ X_\omega : \omega \in \Omega \}$ is called an \textbf{arrangement} if $X_\omega \cap X_\xi \ne \emptyset$ for all $\omega, \xi \in \Omega$ and $X_\omega \not\subseteq X_\xi$ for all $\omega \ne \xi$.

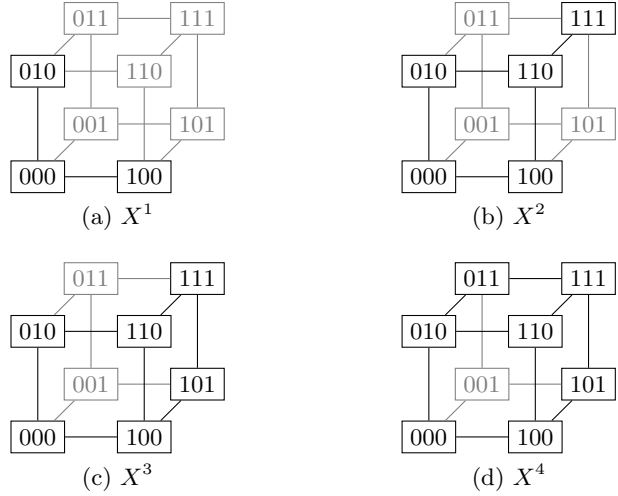
\begin{figure}
\centering

\subfloat[$X^1$]{
\begin{tikzpicture}[scale=.7]

\node[draw](000) at (0,0) {000};
\node[draw, color = gray](001) at (1,1) {001};
\node[draw](010) at (0,2) {010};
\node[draw, color = gray](011) at (1,3) {011};
\node[draw](100) at (2,0) {100};
\node[draw, color = gray](101) at (3,1) {101};
\node[draw, color = gray](110) at (2,2) {110};
\node[draw, color = gray](111) at (3,3) {111};

\draw[color=gray] (000) -- (001);
\draw (000) -- (010);
\draw (000) -- (100);

\draw[color=gray] (011) -- (010);
\draw[color=gray] (011) -- (001);
\draw[color=gray] (011) -- (111);

\draw[color=gray] (101) -- (100);
\draw[color=gray] (101) -- (111);
\draw[color=gray] (101) -- (001);

\draw[color=gray] (110) -- (111);
\draw[color=gray] (110) -- (100);
\draw[color=gray] (110) -- (010);
\end{tikzpicture}
} \hspace{2cm}
\subfloat[$X^2$]{
\begin{tikzpicture}[scale=.7]

\node[draw](000) at (0,0) {000};
\node[draw, color = gray](001) at (1,1) {001};
\node[draw](010) at (0,2) {010};
\node[draw, color = gray](011) at (1,3) {011};
\node[draw](100) at (2,0) {100};
\node[draw, color = gray](101) at (3,1) {101};
\node[draw](110) at (2,2) {110};
\node[draw](111) at (3,3) {111};

\draw[color=gray] (000) -- (001);
\draw (000) -- (010);
\draw (000) -- (100);

\draw[color=gray] (011) -- (010);
\draw[color=gray] (011) -- (001);
\draw[color=gray] (011) -- (111);

\draw[color=gray] (101) -- (100);
\draw[color=gray] (101) -- (111);
\draw[color=gray] (101) -- (001);

\draw (110) -- (111);
\draw (110) -- (100);
\draw (110) -- (010);
\end{tikzpicture}
}

\subfloat[$X^3$]{
\begin{tikzpicture}[scale=.7]

\node[draw](000) at (0,0) {000};
\node[draw, color = gray](001) at (1,1) {001};
\node[draw](010) at (0,2) {010};
\node[draw, color = gray](011) at (1,3) {011};
\node[draw](100) at (2,0) {100};
\node[draw](101) at (3,1) {101};
\node[draw](110) at (2,2) {110};
\node[draw](111) at (3,3) {111};

\draw[color=gray] (000) -- (001);
\draw (000) -- (010);
\draw (000) -- (100);

\draw[color=gray] (011) -- (010);
\draw[color=gray] (011) -- (001);
\draw[color=gray] (011) -- (111);

\draw (101) -- (100);
\draw (101) -- (111);
\draw[color=gray] (101) -- (001);

\draw (110) -- (111);
\draw (110) -- (100);
\draw (110) -- (010);

\end{tikzpicture}
}
\hspace{2cm}
\subfloat[$X^4$]{
\begin{tikzpicture}[scale=.7]

\node[draw](000) at (0,0) {000};
\node[draw, color = gray](001) at (1,1) {001};
\node[draw](010) at (0,2) {010};
\node[draw](011) at (1,3) {011};
\node[draw](100) at (2,0) {100};
\node[draw](101) at (3,1) {101};
\node[draw](110) at (2,2) {110};
\node[draw](111) at (3,3) {111};

\draw[color=gray] (000) -- (001);
\draw (000) -- (010);
\draw (000) -- (100);

\draw (011) -- (010);
\draw[color=gray] (011) -- (001);
\draw (011) -- (111);

\draw (101) -- (100);
\draw (101) -- (111);
\draw[color=gray] (101) -- (001);

\draw (110) -- (111);
\draw (110) -- (100);
\draw (110) -- (010);

\end{tikzpicture}
}
\caption{Arrangements} \label{fig:arrangements}
\end{figure}

We denote the \textbf{content} of $X$ by $\hat{X} := \bigcup_{\omega \in \Omega} X_\omega$.

\begin{lemma} \label{lem:intersection_arrangment}
Let $X = \{X_\omega : \omega \in \Omega \}$ be an arrangment, then $Y := \bigcap_{\omega \in \Omega} X_\omega$ is a non-empty subcube.
\end{lemma}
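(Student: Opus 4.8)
The plan is to exploit the canonical description of a subcube by a partial assignment, and to show that the pairwise non-emptiness built into the definition of an arrangement forces these partial assignments to be globally consistent; their common refinement is then again a subcube.

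First I would fix, for each $\omega \in \Omega$, a representation $X_\omega = \X[s_\omega, \alpha^\omega]$ with $s_\omega \subseteq Z$ and $\alpha^\omega \in \{0,1\}^{s_\omega}$. The key step is a compatibility claim: for all $\omega, \xi \in \Omega$ and all $i \in s_\omega \cap s_\xi$ one has $\alpha^\omega_i = \alpha^\xi_i$. This is immediate from $X_\omega \cap X_\xi \ne \emptyset$: picking any $x$ in that intersection yields $\alpha^\omega_i = x_i = \alpha^\xi_i$.

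Then I would set $s := \bigcup_{\omega \in \Omega} s_\omega$ and define $\alpha \in \{0,1\}^s$ by $\alpha_i := \alpha^\omega_i$ for an arbitrary $\omega$ with $i \in s_\omega$; the compatibility claim makes this unambiguous. It remains to check $Y = \X[s, \alpha]$ by a two-way inclusion: if $x \in Y$ then $x_i = \alpha^\omega_i = \alpha_i$ for every $\omega$ and every $i \in s_\omega$, hence $x_s = \alpha$; conversely, if $x_s = \alpha$ then for each $\omega$ and each $i \in s_\omega \subseteq s$ we get $x_i = \alpha_i = \alpha^\omega_i$, so $x \in X_\omega$. Finally $\X[s,\alpha] \ne \emptyset$, since $\alpha$ extends to a full configuration (say $0$ on $Z \setminus s$), which then lies in $Y$.

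I do not expect any genuine obstacle: the entire content of the lemma sits in the one-line compatibility claim, which is a Helly-type phenomenon for subcubes — pairwise intersection already guarantees a common point. The only points that deserve a word of care are that $\Omega$ may be infinite (harmless, as every argument touches at most two indices at a time) and that the incomparability clause $X_\omega \not\subseteq X_\xi$ from the definition of an arrangement plays no role here.
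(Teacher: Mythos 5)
Your proof is correct and follows essentially the same route as the paper's: write each $X_\omega$ as $\X[s^\omega,\alpha^\omega]$, use pairwise non-emptiness to see the partial assignments agree on overlaps, and take their common refinement $\X[\sigma,\alpha]$ with $\sigma=\bigcup_\omega s^\omega$. Your added details (the two-way inclusion and the explicit extension of $\alpha$ witnessing non-emptiness) are exactly what the paper leaves as ``easy to verify.''
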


\begin{proof}
Denote $X_\omega := \X[s^\omega, \alpha^\omega]$ for all $\omega \in \Omega$. Then for any $\omega, \xi$, we have $\alpha^\omega_{s^\omega \cap s^\xi} = \alpha^\xi_{s^\omega \cap s^\xi}$, therefore we can define $\sigma = \bigcup_{\omega \in \Omega} s^\omega$ and $\alpha \in \{0,1\}^\sigma$ with $\alpha_{s^\omega} = \alpha^\omega_{s^\omega}$ for all $\omega \in \Omega$. Then it is easy to verify that $Y = \X[\sigma, \alpha]$.\qed
\end{proof}

For any $C \subseteq \{0,1\}^Z$, we classify any $i \in Z$ as follows.
\begin{itemize}
	\item If $x_i = y_i$ for any $x, y \in C$, then $i$ is an \textbf{external} dimension of $C$. Otherwise, $i$ is an \textbf{internal} dimension of $C$.
	
	\item If for any $x \in C$, $\overline{x^i} \in C$, then $i$ is a \textbf{free} dimension of $C$.
	
	\item If $i$ is an internal, non-free dimension of $C$, then $i$ is a \textbf{tight} dimension of $C$. If $i$ is a tight dimension, then there exists $z \notin C$ such that $\overline{z^i} \in C$; such $z$ is called an $i$-\textbf{border} of $C$.
\end{itemize}

If $X = \{X_\omega = \X[s^\omega, \alpha^\omega] : \omega \in \Omega\}$ is an arrangement, then (following the notation used in the proof of Lemma \ref{lem:intersection_arrangment}) the dimensions of $\hat{X}$ are as follows.
\begin{itemize}
	\item Let $\tau := \bigcap_{\omega \in \Omega} s^\omega$, then $\tau$ is the set of external dimensions of $\hat{X}$. The smallest cube containing $\hat{X}$ is $K(\hat{X}) := \X[ \tau, \alpha_\tau ]$.

	\item $\sigma := \bigcup_{\omega \in \Omega} s^\omega$ and $Z \setminus \sigma$ is the set of free dimensions of $\hat{X}$. The intersection subcube of $X$ is $Y:= \bigcap_{\omega \in \Omega} X_\omega = \X[\sigma, \alpha]$.
	
	\item The other dimensions in $\sigma \setminus \tau$ are the tight dimensions of $\hat{X}$.
\end{itemize}

For instance, let $Z = [3]$ and consider the following arrangements: 
\begin{align*}
	X^1 &= \{ (x_1, x_3) = (0,0) \} \cup \{ (x_2, x_3) = (0,0) \}\\
	X^2 &= \{ x_3 = 0 \} \cup \{ (x_1, x_2) = (1,1) \}\\
	X^3 &= \{ x_3 = 0 \} \cup \{ x_1 = 1 \}\\
	X^4 &= \{ x_3 = 0 \} \cup \{ x_1 = 1 \} \cup \{ x_2 = 1 \}.
\end{align*}
They are displayed in Figure \ref{fig:arrangements}. The dimensions of the different arrangements are classified as follows: for $X^1$, $1$ and $2$ are tight and $3$ is external; for $X^2$, all are tight; for $X^3$, $1$ and $3$ are tight and $2$ is free; for $X^3$, all are tight.

%
%
%
%
%
%
%

Let $C \subseteq \{0,1\}^Z$ and $f:\{0,1\}^Z \to \{0,1\}^Z$. For any $i \in Z$, we say $f_i$ is \textbf{trivial} on $C$ if 
$f_i(x) = x_i$ for all $x \in C$. We say $f_i$ is \textbf{uniform} on $C$ if for any $x, y \in C$, $x_i = y_i \implies f_i(x) = f_i(y)$. We say $f$ is \textbf{uniform nontrivial} on $C$ if $f_i$ is nontrivial and uniform on $C$ for all $i$.

We can then define a class of globally commutative Boolean networks by their transition graphs. Let $X$ be an arrangement. Outside of $\hat{X}$, $f$ is trivial: $f(x) = x$ if $x \notin \hat{X}$. In $\hat{X}$, $f$ satisfies the following:
\begin{enumerate}
	\item $f_i(x) = \alpha_i$ for every tight dimension $i$ of $\hat{X}$,

	\item $f_j$ is uniform nontrivial for any free dimension $j$ of $\hat{X}$,

	\item $f_k$ is trivial on any external dimension $k$ of $\hat{X}$.
\end{enumerate}
Any such network is referred to as an \textbf{arrangement network}.

For instance, the arrangement $X^3$ on Figure \ref{fig:arrangements} has three arrangement networks, one for each choice of the uniform nontrivial $f_2$ on $\hat{X}^3$. One such network, with $f_2(x) = \neg x_2$, is displayed in Figure \ref{fig:arrangement_network}.

\begin{figure}
\centering

\begin{tikzpicture}[scale=.7]

\node[draw](000) at (0,0) {000};
\node[draw, color = gray](001) at (1,1) {001};
\node[draw](010) at (0,2) {010};
\node[draw, color = gray](011) at (1,3) {011};
\node[draw](100) at (2,0) {100};
\node[draw](101) at (3,1) {101};
\node[draw](110) at (2,2) {110};
\node[draw](111) at (3,3) {111};

\draw[color=gray] (000) -- (001);
\draw[latex-latex] (000) -- (010);
\draw[-latex] (000) -- (100);

\draw[color=gray] (011) -- (010);
\draw[color=gray] (011) -- (001);
\draw[color=gray] (011) -- (111);

\draw[-latex] (101) -- (100);
\draw[latex-latex] (101) -- (111);
\draw[color=gray] (101) -- (001);

\draw[latex-] (110) -- (111);
\draw[latex-latex] (110) -- (100);
\draw[latex-] (110) -- (010);

\end{tikzpicture}
\caption{Arrangement network for $\hat{X}^3$} \label{fig:arrangement_network}

\end{figure}
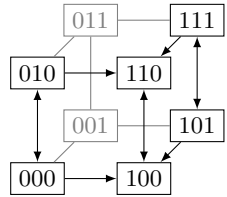

We can combine families of globally commutative networks as follows. For any $f$, a singleton $\{x\}$ is a connected component of $\Gamma(f)$ if and only if $x$ is an \textbf{unreachable fixed point} of $f$:
	${f^{(s)}(y) = x \iff y = x}$ for all ${s \in \mathcal{P}(X)}$.
Let $U(f)$ be the set of unreachable fixed points of $f$ and $R(f) = \{0,1\}^Z \setminus U(f)$. If $\{f^a : a \in A \}$ is a family of networks with $R(f^a) \cap R(f^{a'}) = \emptyset$ for all $a,a' \in A$ (or equivalently, $x \in U(f^a)$ or $x \in U(f^{a'})$ for any $x \in \{0,1\}^Z$), we define their \textbf{union} as
\[
	F(x) := \bigcup_{a \in A} f^a(x) = \begin{cases}
	f^a(x) 	&\text{if } x \in R(f^a)\\
	x 		&\text{otherwise.}
	\end{cases}
\]

It is easy to see that in an arrangement network $f$ for $X$, if $x \in X_\omega$ for some $\omega \in \Omega$, then so does $f^{(s)}(x)$ for any $s$. Therefore, the connected components of $\Gamma(f)$ are precisely $\hat{X}$ and all singletons $\{ \{y\} : y \notin \hat{X} \}$. It follows that if $f^a$ is an arrangement network for $X^a$, then the union $\bigcup_{a \in A} f^a$ is well defined if and only if $\hat{X^a} \cap \hat{X^{a'}} = \emptyset$ for all $a, a' \in A$.

It is then clear that $f$ is a union of arrangement networks if and only if, for every connected component $C$ of $\Gamma(f)$, the following holds:
\begin{enumerate}
	\item $C$ is the content of an arrangement,
	
	\item $f$ is uniform nontrivial on $C$,
	
	\item $f_i = cst = \neg z_i$ for any tight dimension $i$ and any $i$-border $z$ of $C$.
\end{enumerate}

\begin{theorem}
  \label{theo:boolean_cs}
  A Boolean network is globally commutative if and only if it is a union of arrangement networks.
\end{theorem}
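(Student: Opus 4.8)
The plan is to prove both implications, after using the characterisation of unions of arrangement networks given just before the statement to reduce the claim to: \emph{$f$ is globally commutative if and only if every connected component $C$ of $\Gamma(f)$ is the content of an arrangement, $f$ is uniform nontrivial on $C$, and for every tight dimension $i$ of $C$ and every $i$-border $z$ of $C$ one has $f_i \equiv \neg z_i$ on $C$.} (The singleton components of $\Gamma(f)$ are exactly the unreachable fixed points, so they impose no constraint.)

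For the ``if'' direction I would first treat a single arrangement network $f$ for an arrangement $X = \{X_\omega = \X[s^\omega,\alpha^\omega] : \omega \in \Omega\}$. The key point is that $f$ acts coordinatewise on $\hat X$: for $x \in \hat X$ we have $f_i(x) = g_i(x_i)$, where $g_i$ is the constant map $\alpha_i$ for a tight dimension $i$, the prescribed uniform map for a free dimension, and $\id$ for an external dimension; moreover every $X_\omega$ (hence $\hat X$, hence every connected component) is invariant under every $f^{(s)}$, since the coordinates in $s^\omega$ are external or tight and $f$ keeps them at the value $\alpha^\omega$. As $f$ is the identity outside $\hat X$, for all $x$ and all $s,t \in \mathcal{P}(Z)$ the configurations $f^{(s,t)}(x)$ and $f^{(t,s)}(x)$ agree coordinatewise: coordinate $i$ of both equals $g_i^{\one{i \in s}+\one{i \in t}}(x_i)$ when $x \in \hat X$, and equals $x_i$ otherwise; the exponent is symmetric in $s$ and $t$, so $f$ satisfies \ref{item:Cs}. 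For a union $f = \bigcup_{a \in A} f^a$ of arrangement networks (so the contents $\hat{X^a}$ are pairwise disjoint and are exactly the non-singleton components of $\Gamma(f)$), $f^{(s)}$ restricted to $\hat{X^a}$ equals $(f^a)^{(s)}$ and $f^{(s)}$ fixes every configuration lying in no content; thus \ref{item:Cs} for $f$ follows component by component from \ref{item:Cs} for each $f^a$.

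For the ``only if'' direction, assume $f$ is globally commutative. The basic tool is obtained by reading coordinate $i$ off $f^{(s,i)} = f^{(i,s)}$ for $i \notin s$: $f_i\bigl(f^{(s)}(x)\bigr) = f_i(x)$ for every $x$, $s$ and $i \notin s$; equivalently, writing $D(x) := \{i : f_i(x) \neq x_i\}$, one has $f_i(\overline{x^u}) = f_i(x)$ whenever $u \subseteq D(x)$ and $i \notin u$. Consequently, along an arc $x \to \overline{x^u}$ of $\Gamma(f)$ one has $D(x) \setminus u \subseteq D(\overline{x^u}) \subseteq D(x)$; in particular, a coordinate lying in no $D(x)$ with $x \in C$ is constant on $C$ and trivial there, hence an external dimension of $C$. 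The other ingredient is dynamical locality of all updates (Theorem~\ref{th:CD}), giving $f^{(s)^3} = f^{(s)}$, together with the pair constraints coming from $f^{(i,j)} = f^{(j,i)}$ read on each $\{i,j\}$-fibre of $Q^Z$, which tie together the behaviour of $f_i$ and $f_j$.

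The heart of the argument — and the step I expect to be the main obstacle — is to prove that $f$ is \emph{uniform} on each component $C$, i.e. that $f_i(x)$ depends only on $x_i$ for $x \in C$; I would establish this by propagating the identity $f_i(\overline{x^u}) = f_i(x)$ along paths of $\Gamma(f)$ inside $C$ and invoking the pair constraints to control the effect of a path that flips coordinate $i$ itself. Granting uniformity, $f|_C$ is coordinatewise with $f_i|_C = g_i$ a self-map of $\{0,1\}$, and the remaining verifications are bookkeeping: if $g_i = \id$ then $i$ lies in no $D(x)$, so $i$ is external, whence every internal coordinate has $g_i \in \{\neg, 0, 1\}$ and $f$ is uniform nontrivial on $C$; if $g_i = \neg$ then $i \in D(x)$ for all $x \in C$, so $f^{(i)}(x) = \overline{x^i} \in C$ and $i$ is free, so every tight dimension $i$ has $g_i$ constant; and for an $i$-border $z$ of a tight dimension $i$, the fact that $f^{(i)}(\overline{z^i}) = z \notin C$ forces $g_i \equiv \neg z_i$. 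Finally, to see that $C$ is the content of an arrangement, I would take the family of inclusion-maximal subcubes of $\{0,1\}^Z$ contained in $C$ and show, using the structure above (external coordinates fixed, free coordinates unconstrained, each tight coordinate controlled by its unique forbidden border value), that this family covers $C$ and is pairwise intersecting, i.e. is an arrangement with content $C$.
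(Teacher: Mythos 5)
Your overall architecture coincides with the paper's: reduce to the componentwise characterisation of unions of arrangement networks, prove the ``if'' direction by observing that an arrangement network acts coordinatewise on $\hat{X}$ and is the identity elsewhere (your symmetric-exponent computation is a clean rephrasing of the paper's $\lambda^2,\mu,\nu$ calculation, and the reduction of a union to its components is exactly the paper's second claim), and prove the ``only if'' direction by establishing, for each connected component $C$, uniformity, the tight-dimension/border constraint, and the arrangement structure. Your preliminary observations for the converse ($f_i(f^{(s)}(x))=f_i(x)$ for $i\notin s$, the inclusions $D(x)\setminus u\subseteq D(\overline{x^u})\subseteq D(x)$ along arcs, and the bookkeeping deriving externality/freeness/border-constancy from a coordinatewise $f|_C$) are all correct and match the paper's Claims.

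The genuine gap is exactly where you flag it: uniformity of $f$ on $C$ is asserted as a plan, not proved, and the plan as stated would not go through. Propagating $f_i(\overline{x^u})=f_i(x)$ along a path in $C$ breaks down at every arc that flips coordinate $i$ (there the value of $f_i$ at the head is not controlled by its value at the tail), and ``pair constraints'' $f^{(i,j)}=f^{(j,i)}$ cannot rescue this: by Theorem~\ref{th:equivalence_commutative}, \ref{item:Ci} does not imply \ref{item:Cs}, and the density example satisfying \ref{item:Ci} is \emph{not} uniform on its components, so any argument for uniformity resting only on pairwise commutations must fail. The paper closes this gap with a confluence statement (Claim~\ref{lemma:z}): any $x,y$ in the same component admit $s,s'$ with $f^{(s)}(x)=f^{(s')}(y)$, proved by a shortest-path induction using Claim~\ref{lemma:Delta} ($f^w(x)=y$ implies $f^{(\Delta(x,y))}(x)=y$) and updates of arbitrary, possibly infinite, sets; uniformity then follows in one line from $f_i(y)=f^{(s',i)}_i(y)=f^{(s,i)}_i(x)=f_i(x)$ after arranging $i\notin s\cup s'$. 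Without this (or an equivalent confluence/common-successor argument), your converse direction is incomplete; the final step that $C$ is the content of an arrangement is likewise only sketched, but it is routine once uniformity and border-constancy are in hand, as in the paper's Claim~\ref{lemma:arrangement}.
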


If $X$ is an arrangement containing at least two subcubes, then $X$ has a tight variable, and hence no network for that arrangement is bijective. Conversely, if $\{X\}$ is an arrangement containing only one subcube, then there is only one bijective arrangement network for $X$, i.e. $f(x) = \neg x$ if $x \in X$ and $f(x) = x$ otherwise, which we shall refer to as the negation on $X$. We obtain the following classification of globally commutative, bijective Boolean networks.

\begin{corollary}
Let $f$ be a globally commutative, bijective Boolean network. Then $f$ is a union of negations on subcubes.
\end{corollary}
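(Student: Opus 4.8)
The plan is to read the statement off Theorem~\ref{theo:boolean_cs} together with the two remarks made immediately before the corollary. Since $f$ is globally commutative, Theorem~\ref{theo:boolean_cs} lets us write $f = \bigcup_{a \in A} f^a$, where each $f^a$ is an arrangement network for an arrangement $X^a$, the contents $\hat{X^a}$ are pairwise disjoint, and $f$ is the identity outside $\bigcup_{a} \hat{X^a}$. As recalled in the text, each $\hat{X^a}$ is invariant under every update of $f^a$, so the connected components of $\Gamma(f)$ are exactly the sets $\hat{X^a}$ together with the singletons $\{y\}$ for $y\notin\bigcup_a\hat{X^a}$; in particular $f$ maps each component into itself. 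Consequently a preimage under $f$ of a point lying in a component must lie in that same component (an arc $q\to f(q)$ joins $q$ and $f(q)$ in $\Gamma(f)$), so if $f$ restricted to some $\hat{X^a}$ is not onto $\hat{X^a}$, then $f$ is not surjective. Now, global commutativity gives dynamical locality ($\ref{item:Cs}\implies\ref{item:Ds}$ by Theorem~\ref{th:CD}, and $f=f^{(Z)}$), and a dynamically local map is bijective iff surjective; hence if $f$ is bijective, then $f$ restricted to each $\hat{X^a}$ is a bijection of $\hat{X^a}$.

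Next I would rule out multi-subcube arrangements. Suppose, for contradiction, that some $X^a$ contains at least two subcubes. By the remark preceding the corollary, $X^a$ then has a tight dimension $i$, so in $f^a$ — hence in $f$, which agrees with $f^a$ on $\hat{X^a}$ — the local function $f_i$ is the constant $\alpha_i$ on $\hat{X^a}$. Since $i$ is internal, there is $x\in\hat{X^a}$ with $x_i\ne\alpha_i$, and such an $x$ is not in the image of $f$ restricted to $\hat{X^a}$; by the previous paragraph this contradicts bijectivity of $f$. Therefore every $X^a$ is a single subcube $\{X_a\}$.

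It remains to identify the arrangement networks of a one-subcube arrangement $\{X_a\}$. Such an $f^a$ is obtained by choosing a uniform nontrivial local function on each free dimension of $X_a$ and keeping the external dimensions fixed; a constant choice on some free coordinate again makes $f$ fail to be onto $X_a$ (same argument as above applied to that coordinate), so bijectivity forces every free coordinate to be negated and every external coordinate fixed, i.e. $f^a$ is the negation on $X_a$ — which is precisely the unique bijective arrangement network for $\{X_a\}$ singled out before the statement. Since the contents $\hat{X^a}=X_a$ are pairwise disjoint, $f=\bigcup_a f^a$ is a union of negations on pairwise disjoint subcubes, as claimed. (Conversely, a negation on a subcube is a bijective arrangement network and any admissible union of such is globally commutative by Theorem~\ref{theo:boolean_cs} and bijective, so the description is exact.)

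The only genuinely delicate point is the first paragraph: on an infinite component, failure of surjectivity does not by itself entail failure of injectivity, so one really needs dynamical locality (inherited from \ref{item:Cs}) to pass from ``not onto a component'' to ``not bijective''. Everything else is bookkeeping on top of Theorem~\ref{theo:boolean_cs} and the arrangement analysis already carried out in the text.
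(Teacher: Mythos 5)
Your proof is correct and follows essentially the same route as the paper, which derives the corollary directly from Theorem~\ref{theo:boolean_cs} together with the two observations preceding the statement (a multi-subcube arrangement has a tight dimension, hence a constant non-surjective coordinate; a single-subcube arrangement admits only the negation as a bijective arrangement network). One small remark: the appeal to dynamical locality in your first and last paragraphs is unnecessary --- once you know that any preimage of a point stays in that point's connected component of $\Gamma(f)$, ``not onto some component'' already yields ``not surjective'' and hence ``not bijective'' directly, so the step you flag as genuinely delicate is in fact immediate.
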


Therefore, the number $A(n)$ of globally commutative, bijective Boolean networks is equal to the number of partitions of the cube $\{0,1\}^n$ into subcubes. This, in turn, is equal to the number of minimally unsatisfiable cnfs on $n$ variables. The first few values of $A(n)$ are given in OEIS A018926.

\bibliographystyle{plain}
\bibliography{FDS}

\newpage
\appendix

\section{Proof of Lemma~\ref{lem:influence}}

\begin{proof}
\ref{item:v}. Consider the family of sets $w \subseteq Z$ such that $\phi(x_{Z \setminus w}, y_w) = \phi(y)$. This family contains a finite set (i.e. $v$), thus it contains a set $u$ of minimum cardinality; it is clear that $u$ is then an influence for $(x,y)$.

\ref{item:influence_delta}. If $u \not\subseteq \Delta(x,y)$, then $t := u \cap \Delta(x,y)$ satisfies $t \subset u$ and $\phi( x_{Z \setminus t}, y_t ) = \phi( x_{Z \setminus u}, y_u ) = \phi(y)$, which is the desired contradiction.

\ref{item:influence_empty}. If $u = \emptyset$, then $\phi(x) = \phi(x_{Z \setminus u}, y_u) = \phi(y)$. Conversely, if $\phi(x) = \phi(y)$, then $\phi(x_{Z \setminus \emptyset}, y_\emptyset) = \phi(y)$ and there is no $t \subset \emptyset$, thus the empty set is an influence of $\phi$ for $(x,y)$.

\ref{item:influence_subset}. Let $v = u \setminus t$ and $z := (x_{Z \setminus v}, y_v)$, then $(x_{Z \setminus u}, y_u) = (z_{Z \setminus t}, y_t)$, from which we easily obtain that $t$ is an influence of $\phi$ for $(z,y)$. \qed
\end{proof}

\section{Proof of Theorem~\ref{th:CD}}

\begin{proof}
\eqref{item:CsDs}. If $f$ is globally commutative, then so is $f^{(s)}$ for any $s$; therefore, we only need to prove that \ref{item:Cs} $\implies$ \ref{item:D}. Now, let $f$ satisfy \ref{item:Cs}, $i \in Z$, then for any $m \ge 1$,
\[
	f^m_i 	= \left( f^{(i, Z \setminus i)^m} \right)_i = \left( f^{(i)^m (Z \setminus i)^m} \right)_i = f^{(i)^m}_i.
\]
Thus,
\[
	f_i^{\pi_q + q - 1} =  f^{ (i)^{\pi_q + q - 1} }_i = f^{ (i)^{q - 1} }_i = f^{q-1}_i.
\]
Since this holds for any $i$, we obtain $f^{\pi_q + q - 1} = f^{q-1}$.

\eqref{item:CiDb}. Let $f$ satisfy \ref{item:Ci} and let $b = \{b_1, \dots, b_k\} \in \mathcal{FP}(Z)$. Then for $m = \pi_q + q - 1$ and $n = q - 1$, we have
\[
	f^{(b)^m} = f^{(b_1, \dots, b_k)^m} = f^{((b_1)^m, \dots, (b_k)^m)} = f^{((b_1)^n, \dots, (b_k)^n)} = f^{(b_1, \dots, b_k)^n} = f^{(b)^n}.
\]

We delay the proofs of \eqref{item:CiD} and \eqref{item:CiDs} until Theorem \ref{th:IC}, where we prove stronger statements.\qed
\end{proof}

\section{Proof of Theorem~\ref{theo:CB}}

\begin{proof}
\eqref{item:CsB}. We only need to prove $\ref{item:Cs} \vdash \ref{item:Bi} \implies \ref{item:B} \implies \ref{item:Bs}$. We first prove $\ref{item:Cs} \vdash \ref{item:Bi} \implies \ref{item:B}$. In that case, we have for all $i \in Z$,
\[
	f_i^{\pi_q} = f_i^{ (i)^{\pi_q} (Z \setminus i)^{\pi_q} } = f_i^{ (Z \setminus i)^{\pi_q} } = \id_i,
\]
and hence $f^{\pi_q} = \id$. We now prove $\ref{item:Cs} \vdash \ref{item:B} \implies \ref{item:Bs}$. In that case, we have for all $s \in \mathcal{P}(Z)$,
\[
	f_s^{ (s)^{\pi_q} } = f_s^{ (s)^{\pi_q} (Z \setminus s)^{\pi_q} } = f_s^{\pi_q} = \id_s,
\]
and hence $f^{(s)^{\pi_q}} = \id$.

\eqref{item:CiB}. We first prove $\ref{item:Ci} \vdash \ref{item:Bi} \implies \ref{item:Bb}$. If $f$ satisfies \ref{item:Ci} and \ref{item:Bi}, then for any $b = \{b_1, \dots, b_k\} \in \mathcal{FP}(Z)$,
\[
	f^{(b)^{\pi_q}} = f^{ (b_1)^{\pi_q}, \dots, (b_k)^{\pi_q} } = \id.
\]
We now prove $\ref{item:Ci} \vdash \ref{item:B} \implies \ref{item:Bi}$. Suppose $f$ satisfies \ref{item:Ci} but not \ref{item:Bi}, then let $f^{(i)}$ be non-injective and $x, y \in Q^Z$ such that $f^{(i)}(x) = f^{(i)}(y)$. Then for any $j \in Z$,
\[
	f_j(x) = f_j( f^{(i)}(x) ) = f_j( f^{(i)}(y) ) = f_j(y),
\]
and hence $f(x) = f(y)$. Thus $f$ is not bijective.

\eqref{item:CinotBs}. Let $Q = \{0,1\}$, $Z = \N$ and
\[
	f(x) = \begin{cases}
	\neg x &\text{if } \delta(x) = 0 \text{ or } \delta(x) = 1,\\
	x & \text{otherwise}.
	\end{cases}
\]
It is easy to verify that $f$ satisfies \ref{item:Ci}, \ref{item:B} and \ref{item:Ds}. However, let $s \subseteq Z$ such that $y = (1_s, 0_{Z \setminus s})$ has density $1/2$, then $f^{(s)}(0_\N) = y = f^{(s)}(y)$ and hence $f^{(s)}$ is not bijective.

\eqref{item:CinotB}. Let $Q = \{0,1\}$, $Z = \N$ and
\[
	f(x) = \begin{cases}
	\neg x &\text{if } \delta(x) = 0,\\
	x & \text{otherwise}.
	\end{cases}
\]
It is easy to verify that $f$ satisfies \ref{item:Ci}, \ref{item:Bi} and \ref{item:Ds}, but $f(0_\N) = 1_\N = f(1_\N)$ and hence $f$ is not bijective.\qed
\end{proof}

\section{Proof of Theorem~\ref{theo:boolean_cs}}

\begin{lemma} \label{item:union_Cs}
Any union of arrangement networks satisfies \ref{item:Cs}.
\end{lemma}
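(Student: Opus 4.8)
The goal is to show that any union $F = \bigcup_{a \in A} f^a$ of arrangement networks satisfies \ref{item:Cs}. By the characterization we are permitted to use, it suffices to show $F^{(s,t)} = F^{(t,s)}$ for all $s, t \in \mathcal{P}(Z)$. First I would reduce from the union to a single arrangement network: since the $\hat{X^a}$ are pairwise disjoint and each is invariant under every update $g \mapsto g^{(s)}$ (as noted in the excerpt, $x \in X_\omega \Rightarrow f^{(s)}(x) \in X_\omega$), and outside $\bigcup_a \hat{X^a}$ the network $F$ is the identity, any trajectory stays inside one $\hat{X^a}$ or is fixed. Hence $F^{(s,t)}(x) = F^{(t,s)}(x)$ reduces, for each $x$, either to the trivial identity case or to the analogous statement for the single arrangement network $f = f^a$ with $x \in \hat{X}$. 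So the crux is: \textbf{every arrangement network satisfies \ref{item:Cs}.}

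For a single arrangement network $f$ for $X = \{X_\omega = \X[s^\omega, \alpha^\omega] : \omega \in \Omega\}$, fix $x \in \hat{X}$ and $s, t \subseteq Z$; I want $f^{(s,t)}(x) = f^{(t,s)}(x)$. The key structural fact to establish is a closed form for $f^{(s)}(x)$ when $x \in \hat{X}$. Write $\sigma = \bigcup_\omega s^\omega$, $\tau = \bigcap_\omega s^\omega$, and recall $Y = \X[\sigma, \alpha]$ is the intersection subcube. The internal-non-free (tight) dimensions are $\sigma \setminus \tau$, the external dimensions are $\tau$, and the free dimensions are $Z \setminus \sigma$. On $\hat{X}$, $f_i(x) = \alpha_i$ for tight $i$, $f_k(x) = x_k$ for external $k$, and $f_j$ is uniform nontrivial for free $j$. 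Since $f_j$ for a free $j$ is uniform and nontrivial and $j \notin s^\omega$ for the relevant $\omega$, I should argue $f_j$ restricted to $\hat{X}$ depends only on $x_j$ (uniformity, together with the fact that the "fibre" of $\hat{X}$ over any value of $x_j$ is connected enough) — so $f_j(x) = \neg x_j$ for all $x \in \hat{X}$, i.e. $f_j$ acts as negation on the free coordinate. Granting this, for $x \in \hat{X}$ the update $f^{(s)}(x)$ sets coordinate $i \in s$ to: $\alpha_i$ if $i$ is tight, $\neg x_i$ if $i$ is free, $x_i$ if $i$ external — and leaves coordinates outside $s$ unchanged; moreover the result is again in $\hat{X}$ (indeed in the same $X_\omega$, and if $s$ touches a tight coordinate, in $Y$ or closer to it).

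Now compare $f^{(s,t)}(x)$ with $f^{(t,s)}(x)$ coordinate by coordinate. A coordinate $i \notin s \cup t$ is untouched by both. For $i \in s \cup t$: if $i$ is external, both sides equal $x_i$ (external coordinates are never changed by any update, since $f$ stays within $X_\omega$). If $i$ is tight, then whichever of $s, t$ reaches $i$ first sets it to $\alpha_i$, and applying $f$ again to a configuration already agreeing with $\alpha$ on that coordinate keeps it $\alpha_i$ (one needs: after the first update the configuration is still in $\hat{X}$, and $f_i \equiv \alpha_i$ there); so both sides give $\alpha_i$. The only genuinely dynamical case is a free coordinate $i \in s \cup t$: here $f_i$ acts as negation on $x_i$, and negation is an involution, so the parity of the number of times $i$ gets updated is what matters — but $i$ is updated once in $f^{(s,t)}$ (in whichever of $s,t$ contains it; if in both, it flips twice, i.e. net identity) and equally often in $f^{(t,s)}$. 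Since the count of updates to $i$ across the two-step composition is $|\{s,t \text{ containing } i\}|$ regardless of order, and flipping commutes with itself, the two sides agree. This settles $f^{(s,t)}(x) = f^{(t,s)}(x)$.

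The main obstacle I anticipate is the middle claim: that $f_j$ uniform nontrivial on $\hat{X}$ for a free dimension $j$ forces $f_j$ to be exactly $x_j \mapsto \neg x_j$ on all of $\hat{X}$ (not merely "some function of $x_j$ that could vary across $X_\omega$"). This needs the definition of arrangement — that $\hat{X}$, split along the free coordinate $j$, has both halves nonempty and that uniformity pins the value. One must also handle the interaction when an update $s$ contains tight coordinates: after setting those to $\alpha$, the configuration may move from $X_\omega$ into the smaller cube $Y$, and one must check the "still in $\hat{X}$, so the local rules still apply" bookkeeping is consistent — in particular that $f$ behaves the same (as negation on free coords, as $\alpha$ on tight coords) on all of $\hat{X}$, which is exactly why the local functions were defined uniformly over $\hat{X}$ rather than per-subcube. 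Once that uniformity-forces-negation lemma is in hand, the coordinate-wise comparison is routine.
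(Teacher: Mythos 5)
Your overall structure matches the paper's proof: first reduce the union to a single arrangement network using the disjointness of the $\hat{X^a}$ and their invariance under every update $f^{(s)}$ (with everything outside fixed), then verify $f^{(s,t)}(x)=f^{(t,s)}(x)$ for $x\in\hat{X}$ coordinate by coordinate according to the type of dimension (external: untouched; tight: set to $\alpha_i$ and kept there; free: governed by the uniform local rule). That reduction and that case analysis are exactly what the paper does.

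The one genuine error is the claim you yourself flag as the crux: that for a free dimension $j$, ``uniform nontrivial'' forces $f_j$ to be the negation $x_j\mapsto\neg x_j$ on $\hat{X}$. This is false. Uniformity says only that $f_j$ restricted to $\hat{X}$ is a function of $x_j$ alone, and nontriviality excludes the identity; that leaves three possibilities, namely negation and the two constant maps, and the constant maps do occur (e.g.\ $Z=\{1\}$, the arrangement consisting of the single subcube $\{0,1\}^Z$, and $f_1\equiv 0$ is a legitimate arrangement network). No amount of connectivity of the fibres will pin the value down to negation, so the lemma you planned to prove cannot be proved. Fortunately it is also unnecessary: your parity argument for free coordinates only ever compares $g(x_i)$ with $g(x_i)$ (when $i$ lies in exactly one of $s,t$) or $g(g(x_i))$ with $g(g(x_i))$ (when $i\in s\cap t$), and these agree for an arbitrary map $g:\{0,1\}\to\{0,1\}$ --- no involution property is used. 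This is precisely the paper's computation, where the free coordinates in $s\cap t$, $s\setminus t$, $t\setminus s$ are transformed by maps $\lambda,\mu,\nu$ determined by uniformity, and both orders of update yield $(\lambda^2,\mu,\nu)$. So replace ``$f_j$ is negation'' by ``$f_j$ depends only on $x_j$ on $\hat{X}$'' throughout, keep your bookkeeping that intermediate configurations stay in the same $X_\omega$ (hence in $\hat{X}$), and the proof closes.
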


\begin{proof}
The proof is the conjunction of the following two claims.

\begin{myclaim}
Any arrangement network is globally commutative. 
\end{myclaim}

\begin{proof}
Let $f$ be an arrangement network for $X$ and $s', t' \in \mathcal{P}(Z)$.  If $x \notin \hat{X}$, then $x$ is fixed, and hence $f^{(s', t')}(x) = f^{(t', s')}(x) = x$. Therefore, let us assume $x \in \hat{X}$. We have $f_\sigma = cst = \alpha_\sigma$, thus we consider $s = (Z \setminus \sigma) \cap s'$ and $t = (Z \setminus \sigma) \cap t'$. Because $f$ is uniform, we can express 
\[
	f_{s \cap t}(x) = \lambda(x_{s \cap t}), \qquad f_{s \setminus t}(x) = \mu(x_{s \setminus t}), \qquad f_{t \setminus s}(x) = \nu(x_{t \setminus s}),
\]
for $\lambda : \{0,1\}^{s \cap t} \to \{0,1\}^{s \cap t}$, $\mu : \{0,1\}^{s \setminus t} \to \{0,1\}^{s \setminus t}$, and $\nu : \{0,1\}^{t \setminus s} \to \{0,1\}^{t \setminus s}$. We obtain
\[
	f^{(s,t)}(x) = ( \alpha_\sigma, \lambda^2(x_{s \cap t}), \mu(x_{s \setminus t}), \nu(x_{t \setminus s}), x_{(Z \setminus \sigma) \setminus (s \cup t)} ) = f^{(t,s)}(x).\qed
\]
\end{proof}

%

\begin{myclaim}
If $\{f^a : a \in A \}$ is a family of globally commutative networks with $R(f^a) \cap R(f^{a'}) = \emptyset$ for all $a,a' \in A$, then their union is also globally commutative.
\end{myclaim}

\begin{proof}
Let $F = \bigcup_a f^a$, $x \in \{0,1\}^Z$ and $s,t \in \mathcal{P}(Z)$. If $x$ belongs to $R(f^a)$, then so do $F^{(s)}(x)$ and $F^{(t)}(x)$, and hence $F^{(s,t)}(x) = (f^a)^{(s,t)}(x) = (f^a)^{(t,s)}(x) = F^{(t,s)}(x)$. If $x$ does not belong to any $R(f^a)$, then by definition $x \in U(F)$ and $F^{(s,t)}(x) = F^{(t,s)}(x) = x$.\qed
\end{proof}
\qed
\end{proof}

\begin{lemma} \label{lemma:c3_is_arrangement}
If $f$ is globally commutative, then it is a union of arrangement networks.
\end{lemma}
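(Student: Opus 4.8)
The goal is to show that any globally commutative Boolean network $f$ decomposes as a union of arrangement networks. By the characterization stated just before Theorem~\ref{theo:boolean_cs}, it suffices to show that for every connected component $C$ of $\Gamma(f)$, (i) $C$ is the content $\hat X$ of some arrangement $X$, (ii) $f$ is uniform nontrivial on $C$, and (iii) for every tight dimension $i$ of $C$ and every $i$-border $z$ of $C$ we have $f_i \equiv \neg z_i$ on $C$. So the work is entirely local to one connected component, and I would fix such a $C$ throughout.

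\medskip

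\noindent\textbf{Step 1: basic closure and dynamical facts about $C$.} First I would record that, since $f$ is globally commutative, $f^{(s)}$ is too for every $s$, and by Theorem~\ref{th:CD}\eqref{item:CsDs} every $f^{(s)}$ is dynamically local; in particular $f$ itself satisfies $f^{\pi_q+q-1}=f^{q-1}$ with $q=2$, so $f^3=f^1$, i.e. on each orbit $f$ either fixes a point or swaps two points. More importantly, I would extract the ``coordinatewise'' structure: for $x\in C$ and $i\in Z$, the single update $f^{(i)}$ maps $x$ to $(f_i(x),x_{Z\setminus i})$, and applying dynamical locality to $f^{(i)}$ (which is automatic) and to $f^{(\{i\})}$ inside the globally commutative setting gives that $f_i$ restricted to $C$, viewed as depending only on $x_i$ once the other coordinates are fixed, is either trivial ($f_i(x)=x_i$) or the negation ($f_i(x)=\neg x_i$) — this is where \ref{item:ICs}-type arguments (Theorem~\ref{th:IC}\eqref{item:ICsCs}: $\ref{item:Cs}\land\ref{item:Is}\iff\ref{item:ICs}$, together with global commutativity of $f^{(s)}$) let me conclude $f^{(s,t)}=f^{(s\cup t)}$ when $s,t$ disjoint, hence $f^{(s)}$ acts on each coordinate of $s$ independently. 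From this I would deduce that $f$ is \emph{uniform} on $C$ — if $x,y\in C$ agree on coordinate $i$ then $f_i(x)=f_i(y)$ — since otherwise two different one-dimensional behaviours in direction $i$ would let me build configurations where $f^{(i,j)}\neq f^{(j,i)}$, contradicting \ref{item:Ci}.

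\medskip

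\noindent\textbf{Step 2: classify the dimensions of $C$ and show $C$ is the content of an arrangement.} With uniformity in hand, I would partition $Z$ into: external dimensions (constant on $C$), free dimensions (those $i$ with $\overline{x^i}\in C$ for all $x\in C$), and tight dimensions (the rest). For a free dimension, uniformity forces $f_i$ to depend only on the other coordinates, and I would check it must be nontrivial on $C$ (a trivial free direction would split $C$ or contradict connectedness/commutativity) — this gives point (ii), ``uniform nontrivial.'' For a tight dimension $i$, I would show $f_i$ is \emph{constant} on $C$ equal to some $\alpha_i\in\{0,1\}$: if $f_i$ took both values on $C$, then since $i$ is internal there are $x,x'\in C$ with $x_i\neq x'_i$, and tightness plus commutativity of the updates $f^{(i)}$ and $f^{(j)}$ would produce a configuration outside $C$ reachable inside $C$, contradicting that $C$ is a connected component; and if $f_i\equiv x_i$ (trivial) on $C$ then $i$ would be free, not tight. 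This constant value $\alpha_i$ is exactly $\neg z_i$ for any $i$-border $z$ (by definition of border: $z\notin C$ but $\overline{z^i}\in C$, and $f$ must send the neighbour $\overline{z^i}$ back into $C$, forcing its $i$-coordinate to $\alpha_i=\neg z_i$), which is point (iii). Finally, to see $C=\hat X$ for an arrangement: let $\sigma$ be the union of tight and free... more precisely, I would write down, for each ``maximal face direction'' of $C$, a subcube, and show $C$ is the union of the maximal subcubes contained in $C$; pairwise these subcubes intersect (they all contain the ``base point'' determined by $\alpha$ on $\sigma=$ external $\cup$ tight dimensions, extended arbitrarily on free ones — here Lemma~\ref{lem:intersection_arrangment} and the dimension bookkeeping after it are the model to follow), and none contains another by maximality, so they form an arrangement with content $C$.

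\medskip

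\noindent\textbf{Step 3: assemble.} Having verified (i)–(iii) for each connected component, I invoke the equivalence stated right before Theorem~\ref{theo:boolean_cs} (``$f$ is a union of arrangement networks iff for every connected component $C$, (i)–(iii) hold'') to conclude. Combined with Lemma~\ref{item:union_Cs} for the converse direction, this proves Theorem~\ref{theo:boolean_cs}.

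\medskip

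\noindent\textbf{Main obstacle.} The delicate point is Step~2: turning the purely dynamical statement ``$C$ is a connected component of $\Gamma(f)$, and $f$ is globally commutative'' into the rigid combinatorial statement ``$C$ is a union of subcubes forming an arrangement, with tight coordinates frozen to a common value.'' The crux is proving $f_i$ is constant on $C$ for tight $i$ and uniform everywhere: one must carefully exploit that $\Gamma(f)$ records $f^{(s)}$ for \emph{all} $s$, so $C$ is closed under every partial update, and then use the commutation $f^{(s,t)}=f^{(t,s)}$ (equivalently $f^{(s,t)}=f^{(s\cup t)}$, via \ref{item:Is} which holds here because dynamical locality plus the swap structure $f^3=f$ gives idempotence of each $f^{(s)}$ — wait, I should instead argue this directly) to rule out any ``direction-dependent'' or ``position-dependent'' behaviour. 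Managing the interaction of infinitely many coordinates — ensuring the subcubes one extracts are genuinely subcubes $\X[s,\alpha]$ and that their intersection is nonempty as in Lemma~\ref{lem:intersection_arrangment} — is where care is needed, but the finite intuition from Figure~\ref{fig:arrangements} is the right guide.
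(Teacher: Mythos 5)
Your overall architecture is the right one (work component by component, establish the three conditions stated before Theorem~\ref{theo:boolean_cs}), and your Step~2 endgame --- maximal subcubes all containing the point determined by $\alpha$ on the external and tight coordinates --- can be made to work once uniformity and constancy on tight dimensions are in hand. But there is a genuine gap at the heart of Step~1, precisely where you yourself hesitate. First, the route through idempotence is not available: global commutativity does \emph{not} imply \ref{item:I} or \ref{item:Is} (the negation $f(x)=\neg x$ on a subcube is globally commutative with $f^2=\id\ne f$), and $f^3=f$ does not yield $f^2=f$; so you cannot invoke Theorem~\ref{th:IC}\eqref{item:ICsCs} to get $f^{(s,t)}=f^{(s\cup t)}$ for disjoint $s,t$. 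That identity is in fact \emph{equivalent} to the uniformity you are trying to prove, so the argument as sketched is circular. Second, your direct fallback for uniformity --- ``two different one-dimensional behaviours in direction $i$ would let me build configurations where $f^{(i,j)}\ne f^{(j,i)}$'' --- does not go through as stated: two configurations $x,y\in C$ with $x_i=y_i$ need not be related by a single update, only by a possibly long zig-zag of forward and backward arcs of $\Gamma(f)$, and commutativity of single updates says nothing directly about such pairs.

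The missing idea is a confluence property for components of $\Gamma(f)$: the paper first shows (Claim~\ref{lemma:Delta}) that any reachability $f^w(x)=y$ can be realised by the single update $f^{(\Delta(x,y))}$, and then (Claim~\ref{lemma:z}), by a minimality argument on the zig-zag length, that any $x,y$ in the same component admit $s,s'$ with $f^{(s)}(x)=f^{(s')}(y)$, where moreover $i\notin s\cup s'$ can be arranged when $x_i=y_i$. Only then does the commutation chain $f_i(y)=f^{(i,s')}_i(y)=f^{(s',i)}_i(y)=f^{(s,i)}_i(x)=f^{(i,s)}_i(x)=f_i(x)$ give uniformity (Claim~\ref{lemma:uniform}), from which constancy on tight dimensions and the arrangement structure follow. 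Without this confluence step (or a substitute for it), your proof of uniformity --- and hence of the whole lemma --- does not close.
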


\begin{proof}
In the sequel, $f$ satisfies \ref{item:Cs} and $C$ is a connected component of $\Gamma(f)$.

\begin{myclaim} \label{lemma:Delta}
If $f^w (x) = y$ for some finite word $w = (w_1, \dots, w_k)$ with $w_i \in \mathcal{P}(Z)$, then $f^{(\Delta(x,y))}(x) = y$.
\end{myclaim}

\begin{proof}
Let $\Delta := \Delta(x,y)$. Suppose $f^{(\Delta)}(x) \ne y$, then there exists $i \in \Delta$ such that $f_i(x) = x_i$ or in other words, $f^{(i)}(x) = x$. But then, for any word $w = (w_1, \dots, w_k)$ where $i$ appears exactly $l$ times, we have
\[
	f_i^{(w_1, \dots, w_k)}(x) = f_i^{(i)^l(w_1 \setminus i, \dots, w_k \setminus i)}(x) = f_i^{(w_1 \setminus i, \dots, w_k \setminus i)}(x) = x_i,
\]
and hence $f^w(x) \ne y$.\qed
\end{proof}

\begin{myclaim} \label{lemma:z}
If $x, y \in C$, then there exist $s, s' \in \mathcal{P}(Z)$ such that $f^{(s)}(x) = f^{(s')}(y)$.
\end{myclaim}

\begin{proof}
Since $x$ and $y$ belong to the same component, there exists a sequence of configurations $x = x^0, x^1, \dots, x^k = y \in \{0,1\}^Z$ and a corresponding sequence of subsets $s^0, \dots, s^{k-1} \in \mathcal{P}(Z)$ such that either $f^{(s^i)}(x^i) = x^{i+1}$ or vice versa, in alternation. Suppose $k$ is minimal. If $k = 1$, then $f^{(s^0)}(x) = f^{(\emptyset)}(y) = y$, and we are done. If $k = 2$, then either $f^{(s^0)}(x) = x^1 = f^{(s^1)}(y)$, or $x = f^{(s^0)}(x^1)$ and $y = f^{(s^1)}(x^1)$ in which case $f^{(s^1)}(x) = f^{(s^0)}(y)$. Now suppose $k \ge 3$. Without loss of generality, we have $f^{(s^0)}(x) = x^1 = f^{(s^1)}(x^2)$ and $f^{(s^2)}(x^2) = x^3$. But then $f^{(s^0, s^2)}(x) = f^{(s^1,s^2)}(x^2) = f^{(s^2)}(x^3) =: \tilde{x}$; denoting $\Delta := \Delta(x, \tilde{x})$, Claim \ref{lemma:Delta} shows that $f^{(\Delta)}(x) = \tilde{x}$. Thus, the sequence $x, \tilde{x}, x^3, \dots, x^k$ contradicts the minimality of $k$.\qed
\end{proof}

\begin{myclaim} \label{lemma:uniform}
$f$ is uniform nontrivial on $C$.
\end{myclaim}

\begin{proof}
Firstly, let $e$ be an external dimension of $C$, then $C \subseteq \{x_e = a\}$ for some $a \in \{0,1\}$ and hence $f_e = cst = a$. Secondly, let $i$ be an internal dimension of $C$. Then $f_i$ is nontrivial on $C$ for any $i$, for otherwise $C$ could be split into two components, one for each value of the coordinate $i$. Let $x \in C$ with $f_i(x) \ne x_i$, and suppose $y \in C$ has $x_i = y_i$. By Claim \ref{lemma:z}, there exist $s, s' \in \mathcal{P}(Z)$ such that $f^{(s)}(x) = f^{(s')}(y)$, and since ${x_i=y_i}$ we can suppose that ${i\not\in s\cup s'}$. Then 
\[
	f_i(y) = f^{(i, s')}_i(y) = f^{(s', i)}_i(y) = f^{(s,i)}_i(x) = f^{(i,s)}_i(x) = f_i(x),
\]
thus $f_i$ is uniform.\qed
\end{proof}

\begin{myclaim} \label{lemma:constant}
On $C$, $f_i = cst = \neg z_i$ for any tight dimension and any $i$-border $z$ of $C$.
\end{myclaim}

\begin{proof}
Suppose $f_i(x) = z_i$ for some $x \in C$ with $x_i \ne z_i$. Let $y = \overline{z^i} \in C$, we then have $f_i(y) = f_i(x) = z_i$ and hence $z \in C$. Thus, $f_i(x) = \neg z_i$ for all $x \in C$ with $x_i = \neg z_i$. Now, if $f_i(x) = z_i$ for some $x_i = z_i$, then $f_i$ is trivial, which contradicts Claim \ref{lemma:uniform}.\qed
\end{proof}

\begin{myclaim} \label{lemma:arrangement}
$C$ is the content of an arrangement.
\end{myclaim}

\begin{proof}
Suppose $C$ is not the content of an arrangement. Consider the decomposition of $C$ into maximal subcubes $C = \bigcup_{\omega \in \Omega} X_\omega$, then there exist $X_0, X_1$ and $i \in Z$ such that $X_0 \subseteq H_0 := \{ x : x_i = 0 \}$ and $X_1 \subseteq H_1 := \{ x : x_i = 1 \}$. We also consider the two subcubes $Y_0 = X_0 \cup \{\overline{x^i} : x \in X_0 \}$ and $Y_1 = X_1 \cup \{\overline{x^i} : x \in X_1 \}$. We note that $i$ is an internal dimension of $C$. If $i$ is a free dimension of $C$, then $Y_0 \subseteq C$, which contradicts the maximality of $X_0$. If $i$ is a tight dimension of $C$, then by Claim \ref{lemma:constant}, $f_i$ is constant on $C$. If $f_i = 1$, then $f^{(i)}(x) = \overline{x^i}$ for any $x \in X_0$, and again $Y_0 \subseteq C$; if $f_i = 0$ we similarly obtain $Y_1 \subseteq C$.\qed
\end{proof}

Then Claims \ref{lemma:uniform}, \ref{lemma:constant} and \ref{lemma:arrangement} conclude the proof.\qed
\end{proof}

\end{document}